\newtheoremstyle{rem}%
{10pt}%
{10pt}%
{}%
{}%
{\bf}%
{.}%
{.5em}%
{}%
\newtheoremstyle{assm}%
{\topsep}%
{\topsep}%
{\normalfont}%
{}%
{\bfseries}%
{}%
{\newline}%
{}%
\newtheoremstyle{alg}%
{\topsep}%
{\topsep}%
{\normalfont}%
{}%
{\bfseries}%
{}%
{\newline}%
{\thmname{#1}\thmnumber{ #2}:\thmnote{ #3}}%
\theoremstyle{assm}
\newtheorem{assumption}{Assumption}
\theoremstyle{alg}
\newtheorem{algorithm}{Algorithm}
\theoremstyle{plain}
\newtheorem{theorem}{Theorem}
\newtheorem{proposition}{Proposition}
\theoremstyle{rem}
\newtheorem{remark}{Remark}
\DeclareMathOperator{\Prob}{\mathbb{P}}
\DeclareMathOperator*{\argmax}{arg~max}
\DeclareMathOperator{\PoSI}{PoSI}
\DeclareMathOperator{\WIMP}{WIMP}
\title{\sc Inference for Impulse Responses under Model Uncertainty%
\thanks{We thank Marco Avarucci, Nalan Bast\"urk, Hanno Reuvers and Peter Schotman for their very helpful discussions and suggestions. We also thank conference and seminar participants at the CFE 2015, London, the NESG 2016, Leuven, and the econometrics seminar at the University of Cologne for their constructive comments. The second author thanks the Netherlands Organization for Scientific Research (NWO) for financial support.}
}
\author{Lenard Lieb%
\thanks{Department of Economics, Maastricht University, P.O. Box 616, 6200 MD Maastricht, The Netherlands. E-mail: \href{mailto:L.Lieb@maastrichtuniversity.nl}{L.Lieb@maastrichtuniversity.nl}}
\\
\and
Stephan Smeekes%
\thanks{Department of Quantitative Economics, Maastricht University, P.O. Box 616, 6200 MD Maastricht, The Netherlands. E-mail: \href{mailto:S.Smeekes@maastrichtuniversity.nl}{S.Smeekes@maastrichtuniversity.nl}}
}
\date{\today}
\begin{document}
\maketitle

\begin{abstract}
In many macroeconomic applications, confidence intervals for impulse responses are constructed by estimating VAR models in levels - ignoring cointegration rank uncertainty. We investigate the consequences of ignoring this uncertainty. We adapt several methods for handling model uncertainty and highlight their shortcomings. We propose a new method – Weighted-Inference-by-Model-Plausibility (WIMP) - that takes rank uncertainty into account in a data-driven way. In simulations the WIMP outperforms all other methods considered, delivering intervals that are robust to rank uncertainty, yet not overly conservative. We also study potential ramifications of rank uncertainty on applied macroeconomic analysis by re-assessing the effects of fiscal policy-shocks.

\noindent\textbf{JEL Classification:} C15; C32; C52; E62.\\
\textbf{Keywords:} Impulse response analysis; cointegration; model uncertainty; bootstrap inference; fiscal policy shocks.
\end{abstract}

\onehalfspacing

\section{Introduction}
Vector autoregressions (VAR) and, more importantly, their implied impulse responses (IR) are essential tools for applied macroeconomists to investigate the dynamic propagation of (structural) shocks. While VARs fitted to macroeconomic data can incorporate information about unit roots and possible cointegration relations, this evidence is regularly ignored in applied work and inference for IR coefficients is usually based on the VAR specification in levels or first-differences. A common argument for the specification in levels is that estimation by ordinary least-squares (OLS) and the associated traditional approach to inference -- for example via an asymptotically normal \citep{Luetkepohl90} or a bootstrap \citep{Kilian98} approximation -- `allows' for the presence of cointegration. Indeed the level specification results in consistent estimates of the VAR parameters regardless of the true underlying cointegration relations, and, for a fixed horizon, associated inferential procedures remain valid for inference on IR coefficients. However, albeit asymptotically valid, confidence intervals may have poor coverage in small samples when the data are highly persistent and when considering responses at ``longer'' horizons \citep{KillianChang00}. \citet{Phillips98} shows theoretically that if one (or more) unit roots are present, confidence bands based on the normal approximation become invalid at ``(very) long horizons'', while \citet{InoueKilian02} and \citet{Mikusheva12} show that the bootstrap also becomes invalid at such increasing horizons.

These seemingly contradicting theoretical results depend on the asymptotic framework considered; or more precisely on the notion of ``(very) long horizons''. If the considered horizon is kept fixed while the sample size is growing, one arrives at standard asymptotic results. However, if the horizon is modelled as a constant proportion of the sample size, the asymptotic distribution becomes non-standard if (near) unit root(s) are present. Similarly, inference via an asymtotically normal approximation based on a wrongly specified vector error correction (VECM) formulation of the VAR becomes invalid at long horizons as well \citep{Elliott98}. Also, it is well known in the bootstrap literature that misspecification of the cointegration rank leads to an invalid bootstrap procedure \citep{Choi05,InoueKilian02,Mikusheva12}.

Within this growing horizon framework, \citet{PesaventoRossi06} construct confidence intervals for ``long-horizon'' IRs using local-to-unity asymptotics. The resulting confidence bands differ substantially from those obtained through traditional approaches, and suffer in turn from size distortions in short to medium horizons. Moreover, their proposed approach to inference does not account for the possibility of near cointegration, limiting its usefulness for applied work.  \citet{Mikusheva12} proposes a procedure that works uniformly well over the entire parameter space and the entire trajectory of the IRs, but her approach only allows for the construction of uniformly valid inference if at most one ``uncertain'' (unit) root is present in the VAR. Furthermore, her suggested inferential procedure is computationally very expensive even for bivariate VARs, let alone VARs of dimensions usually considered in applied research. Similar settings and problems are considered by \citet{Gospodinov04,Gospodinov10}, \citet{GMP11}, \citet{InoueKilian19}, \citet{PesaventoRossi07} and \citet{Wright00} among others, but all consider at most one unknown root near unity. This setting does not allow for uncertainty about the number of cointegrating relations (if any), which we face in practice. \citet{GHP13} consider the more general setting in an extensive simulation study and conclude that the applied researcher is best advised to estimate the system in levels and construct inference in a traditional way. \citet{JMP13} propose an averaging approach for impulse responses of potentially cointegrated VAR models, but their approach still requires a pre-selection of rank, and does not deal with inference explicitly.

In this paper we re-assess the construction of bootstrap confidence intervals for IRs in persistent, possibly non-stationary VARs. Our main intention is to provide the applied researcher with a reliable and robust alternative to the traditional ``levels'' approach, independent of the IR horizon of interest. We approach the issue of choosing the cointegration rank from a model selection perspective, and consider (bootstrap) methods initially designed to overcome model selection uncertainty in different contexts. In particular, we adapt the endogenous lag selection procedure of \citet{Kilian98b}, the model averaging estimators of \citet{HjortClaeskens03} and the bagging approach proposed by \citet{Efron14} to the rank selection problem in VECMs. As elaborated by \citet{LeebPoetscher05}, inference after model selection is difficult, and there is no guarantee that the above-mentioned methods can solve the problems in our setting.

Therefore, we draw inspiration from the Post-Selection Inference (PoSI) approach of \citet{PoSI13}, which explicitly deals with inference after model selection, to propose a novel way of constructing confidence bands by combining intervals of models for any rank. In our approach, labeled as \textit{Weighted Inference by Model Plausibility} (WIMP), upper and lower bounds of all associated fixed-rank intervals are combined depending on the relative evidence for, or plausibility of, each model. Unlike many approaches considered in the VAR literature, our method does not require any pre-selection of ranks; that is, no pre-testing or selection using economic theory is needed. Instead, the method is fully agnostic about the cointegration rank and is fully data-driven. We provide some simple theoretical results establishing pointwise asymptotic validity of our method under general conditions. Our WIMP intervals tend to deliver coverage probabilities close to nominal levels across the entire trajectory of the IRs, even for ``difficult'' situations where cointegrating relations are very weak. Simulation-based evidence also suggests that the WIMP intervals generally outperform all other considered methods, including the traditional ``level'' approach to inference.\footnote{An alternative way to account for rank uncertainty is to consider lag-augmentation, where the VAR in levels is estimated with an additional lag. \citet{TodaYamamoto95} and \citet{DoladoLuetkepohl96} show that Wald tests on the VAR parameters remain valid regardless the order or (co)integration if one lag too many (i.e.~$p$+1) is added to the VAR model, and only the first $p$ lags are used for subsequent analyses. \citet{KilianLuetkepohl17} and \citet{InoueKilian19} suggest this approach for inference on impulse responses as well. However, neither its theoretical nor its small sample properties have been properly investigated in the literature for impulse response analysis. Moreover, combining the lag-augmentation with a bootstrap procedure is no trivial task and would require further study. Notwithstanding these shortcomings, we considered the lag-augmentation approach in our simulation study, where it is shown to perform considerably worse than the WIMP method.}

While we focus on frequentist inference in this paper, it is worth mentioning that rank uncertainty could also be tackled in a Bayesian VAR framework. However, in many Bayesian applications, uncertainty regarding the cointegration rank is often not taken into account explicitly. Although conceptually different, the Bayesian approach to cointegration is often similar in nature to the construction of classical (likelihood-based) inference. That is, the posterior distribution of (impulse response) parameters is often derived conditional on a pre-determined rank, selected using the marginal likelihood or other model comparison approaches \citep[see for example][for a recent survey]{DelNegro11}. However, several approaches incorporating uncertainty about the cointegration rank when analyzing VARs have been suggested in the Bayesian literature. For instance, \citet{Villani01}, \citet{StrachanVanDijk07}, \citet{KPS08} and \citet{StrachanVanDijk13} propose a Bayesian model averaging scheme, similar in spirit to the approach discussed in Section \ref{sec:ma} below. Alternatively, some authors have suggested various priors on the cointegration relations obtained using economic theory \citetext{see e.g.~\citealt{DSSW07} or \citealt{GLP16} and references therein}, which is a different conceptual approach than our fully data-driven, agnostic approach. Moreover, an explicit (theoretical) investigation of the (joint) posterior distribution of impulse responses of VARs under uncertainty on the (co-)integration relations is, however, limited also in the Bayesian literature.

Since uncertainty about the true cointegration rank is mostly ignored in applied macroeconomic research, we investigate to what extend our more robust approach(es) may change the interpretation of results in practice. More specifically, we re-evaluate the effects of fiscal policy based on four influential structural VAR frameworks. Considering  \citeauthor{BlanchardPerotti02}'s \citeyearpar{BlanchardPerotti02} recursive identification strategy, \citeauthor{MountfordUhlig09}'s \citeyearpar{MountfordUhlig09} sign-restriction approach based on penalty functions, \citeauthor{Ramey11}'s \citeyearpar{Ramey11} narrative VAR framework, and \citeauthor{MertensRavn14}'s \citeyearpar{MertensRavn13,MertensRavn14} proxy-VAR, we find that neglecting rank uncertainty might lead to misleading results. As a companion to this paper, a ready-to-use MATLAB toolbox for the WIMP approach combined with various SVAR identification schemes is available online.\footnote{\href{http://www.stephansmeekes.nl}{www.stephansmeekes.nl}}

The remainder of this paper is organized as follows. In Section \ref{sec:IR_VAR} we discuss standard (bootstrap) approaches to inference in cointegrated VARs and illustrate empirically potential ramifications of rank misspecification. Section \ref{sec:r_unc} first discusses several approaches considered in the literature about model uncertainty and their adaptations to account for rank uncertainty, and next introduces the WIMP method. The performance of the suggested methods is investigated by simulation in Section \ref{sec:sim}. Fiscal policy under rank uncertainty is analyzed in Section \ref{sec:fis}. Section \ref{sec:conc} concludes. Appendices \ref{sec:LAVAR} and \ref{sec:data} contain additional simulation results and data descriptions, respectively.

\section{Bootstrap Inference for Impulse Responses} \label{sec:IR_VAR}

\subsection{The Cointegrated VAR Model and Impulse Responses}
Consider the $K$-dimensional structural vector autoregressive (SVAR) time series process $y_t = (y_{1,t}, \ldots, y_{K,t})^\prime$ observed at $t=1, \ldots,T$:
\begin{equation} \label{eq:SVAR}
B_0 y_t = \sum_{j=1}^{p} B_j y_{t-j} + \varepsilon_t,
\end{equation}
where $\varepsilon_t$ is a $K$-dimensional vector of contemporaneously and serially uncorrelated, weakly stationary structural shocks and $B_0$ is the invertible contemporaneous impact matrix. Pre-multiplying both sides of $\eqref{eq:SVAR}$ with $B_0^{-1}$, we obtain the reduced-form VAR
\begin{equation} \label{eq:VAR}
y_t = \sum_{j=1}^{p} A_j y_{t-j} + u_t,
\end{equation}
where $A_j = B_0^{-1} B_j$ and $u_t = B_0^{-1} \varepsilon_t$.

Define the lag polynomial $A(z)$ as $A(z) = I_k - \sum_{j=1}^p A_j z^j$, such that we can write $A(L) y_t = u_t$, where $L$ is the lag operator $L^j y_t = y_{t-j}$. We now formulate assumptions that allow $y_t$ to be (co)integrated with $r$ cointegrating relations, which we label the `$I(1,r)$ conditions' as in \citet{CRT12}.\footnote{Note that we do not necessarily require that all elements in $y_t$ are integrated of order one. That is, some series may be $I(0)$. In this case cointegration is of a trivial form, as any linear combination of $I(0)$ series remains $I(0)$.}

\begin{assumption}[$I(1,r)$ conditions] \label{ass:1}
\leavevmode \vspace{-\baselineskip}	
\begin{enumerate}[(i)]
\item $A(z)$ has exactly $K- r$ roots equal to 1 and all other roots are outside the unit circle.
\item Defining $\Pi = A(1)$, we have that $\Pi = \alpha \beta^\prime$ for $K \times r$ matrices $\alpha$ and $\beta$ with full column rank, with the implicit definition that $\alpha \beta^\prime = 0$ when $r=0$.
\end{enumerate}
\end{assumption}

If $y_t$ satisfies the $I(1,r)$ conditions, we can write $y_t$ as a VECM
\begin{equation} \label{eq:VECM}
\Delta y_t = \Pi y_{t-1} + \sum_{j=1}^{p-1} \Gamma_j \Delta y_{t-j} + u_t, \qquad t=1, \ldots,T,
\end{equation}
where $\Gamma_j = -\sum_{i=j+1}^p A_j$ for $j=1,\ldots, p-1$.

We can invert the VAR model \eqref{eq:VAR} to obtain the moving average representation 
$y_t = \sum_{j=0}^{t-1} \Psi_j u_{t-j} = \sum_{j=0}^{t-1} \Psi_j B^{-1}_0 \varepsilon_{t-j}$, %
where the $\Psi_j$ matrices contain the reduced-form (i.e.~forecast error) impulse responses and $\Phi_j = \Psi_j B_0^{-1}$ the structural impulse responses. However, as $B_0$ is not identified, we cannot obtain $\Phi_j$ in a unique way, and estimating the structural shocks and their impulse responses requires imposing a particular identification scheme. For that purpose, let $P$ be a $K \times K$ matrix such that $PP^\prime = \Sigma_u$, where the specific form of $P$ depends on the identification method. Then we define the identified structural impulse responses as $\Phi_j = \Psi_j P$. In Section \ref{sec:fis} we discuss several ways to identify the structural shocks.\footnote{As the impulse responses only depend on the cointegration parameters $\beta$ through their product with the loadings $\alpha$, that is through the error correction term $\Pi = \alpha \beta^\prime$, we are not concerned with identification of $\beta$, unlike the setting where inference on the long run relations themselves is the objective.}

For ease of notation later on, we directly link the impulse responses to the VECM parameters. Let $\theta = vec(\Pi, \Gamma_1, \ldots, \Gamma_{p-1})$ denote the vector of VECM parameters. Then we can define $\Psi_j = f_j (\theta)$ and $\Phi_j = f_j (\theta) P$ for $j = 0, \ldots, t-1$, where the nonlinear functions $f_j(\cdot)$ are defined implicitly through inverting the VAR model.

\subsection{Inference Conditional on a Selected Rank}
We can estimate the VECM \eqref{eq:VECM} for a given rank $r$ using the Gaussian quasi maximum likelihood estimator of \citet{Johansen95} to obtain estimates $\hat{\theta}^{(r)} = (\hat{\Pi}^{(r)}, \hat{\Gamma}_{1}^{(r)}, \ldots, \hat{\Gamma}_{p}^{(r)}, \hat{\Sigma}_u^{(r)} )^\prime$, where the superscript $(r)$ emphasizes that estimation is conditional on $r$. Note that $\hat{\Pi}^{(r)} = \hat{\alpha}^{(r)} \hat{\beta}^{(r)\prime}$ and $\hat{P}^{(r)}$ is an estimate of $P$ such that $\hat{P}^{(r)} \hat{P}^{(r)\prime} = \hat{\Sigma}_u^{(r)}$, with $\hat{\Sigma}_u^{(r)}$ the residual variance estimator from the VECM. From inverting the VAR representation of the model, we can then straightforwardly obtain the estimates of the moving average terms, $\hat{\Psi}_0^{(r)}, \ldots, \hat{\Psi}_h^{(r)}$, where $h$ is the (maximum) horizon we are interested in. Specifically, we define the estimated impulse responses as $\hat{\Psi}_{j}^{(r)} = f_j (\hat{\theta}^{(r)})$ and $\hat{\Phi}_{j}^{(r)} = f_j (\hat{\theta}^{(r)}) \hat{P}^{(r)}$, for $j=0,\ldots,h$. 

To account for deterministic components, we can first regress $y_t$ on a constant and possibly a linear time trend to obtain the detrended series $\tilde{y}_t = y_t - \hat{\mu}_0 - \hat{\mu}_1 t$ for $t=1,\ldots,T$ and estimate the VECM without deterministic components on $\tilde{y}_t$ (see also Remark \ref{rem:detr}).

Now consider a general impulse response $\zeta$, which is the object of interest of the analysis. Typically, this would be an element of either $\Psi_j$ or $\Phi_j$ for a certain $j$; that is, $\zeta = \psi_{j,a,b}$ or $\zeta = \phi_{j,a,b}$, where the subscript `$a,b$' indicates the $(a,b)$-th element of the matrix. It might also be a combination of elements; for example, if one wants to perform simultaneous inference across horizons, using the ideas proposed in \citet{BruderWolf17} and \citet[Section 3.6]{LSW15}, we could take $\zeta = \max_{0 \leq j \leq h} \psi_{j,a,b}$, $\zeta = \max_{0 \leq j \leq h} \phi_{j,a,b}$, or its studentized versions. Similarly, one could take the Wald statistics of \citet{InoueKilian16} as $\zeta$. The bootstrap algorithm works the same regardless of the specific object of interest; writing $\zeta$ for a general object of interest simply avoids too cumbersome notation and the need to be specific about its particular form. Regardless of the specific form of $\zeta$, it will be a function of the VAR model parameters $\theta$, and its estimator $\hat{\zeta}^{(r)}$ will be the same function of the VAR parameter estimators $\hat{\theta}^{(r)}$, that is, $\zeta = \bar{f} (\theta)$ and $\hat{\zeta}^{(r)} = \bar{f} (\hat{\theta}^{(r)})$, where the form of the function $\bar{f} (\cdot)$ depends on the desired object of interest.

Various algorithms can be used to construct bootstrap confidence intervals for $\zeta$. In the simulation and empirical sections we use straightforward algorithm based on \citeauthor{Hall92}'s (\citeyear{Hall92}) bootstrap percentile interval, which has regularly been considered in the literature, see e.g. \citet{BLW01}. Details are provided in Appendix \ref{sec:alg}. Other common bootstrap methods that are used include \citeauthor{Efron79}'s (\citeyear{Efron79}) percentile interval and \citeauthor{Kilian98}'s (\citeyear{Kilian98}) bias-corrected bootstrap. Irrespective of the specific choices that can be made, all these algorithms have in common that they generate a bootstrap sample, say $\{y_t^*\}_{t=1}^T$, that has a fixed cointegrating rank $r$. Bootstrap impulse responses are then estimated from this bootstrap sample and used to set up a confidence interval of the form $[L^{(r)}(\gamma), U^{(r)}(\gamma)]$, where the superscript `$(r)$' again highlights the dependence on the chosen rank $r$, and $\gamma$ is the desired confidence level. Hence, the bootstrap adds a second layer of potential rank misspecification next to the estimators themselves, which turns out to lead to further complications if one wants to account for rank uncertainty, as we discuss in Section \ref{sec:r_unc} below. Before discussing methods that potentially can account for rank uncertainty, we illustrate the perils of rank misspecification next.

\subsection{Effects of Rank Misspecification} \label{sec:rankmis}

Standard bootstrap inference assumes knowledge of the true cointegrating rank, labeled as $r_0$; if $r \neq r_0$, inference on $\zeta$ will be inappropriate, in particular for longer horizons. If the chosen rank $r$ is smaller than the true rank, the estimated IRs converge to `pseudo-true' values $\theta_j^{(r)}$ which are different from the true ones. This arises because the VAR parameters converge to their pseudo-true values which satisfy the (incorrect) rank restriction, c.f.~\citet{CRT12}. While in this case bootstrap inference remains valid for the pseudo-true parameters, these parameters can be substantially different from the true IRs, making their interpretation and therefore inference somewhat meaningless, in particular as one typically tries to uncover structural effects which requires knowledge of true parameters.

On the other hand, if $r > r_0$, as for instance in the VAR in levels specification, the short (fixed $j$) and medium ($j/n \rightarrow 0$) horizon IRs are estimated consistently, but at long horizons ($j \sim n$) IRs are inconsistent and even random and inference becomes invalid \citep{Phillips98}.\footnote{Consistency of the estimated IRs also depends on the type of identification considered. For example, under long-run identification the short-run IRs are also not estimated consistently, see e.g.~\citet{Gospodinov10}.} The inconsistency is caused by the domination of the error correction terms for the long-horizon IRs, and their insufficient estimation accuracy under rank misspecification. The same occurs for bootstrap inference; while valid for short and medium horizon IRs, it becomes invalid at long horizons, as demonstrated in different contexts by \citet{Choi05}, \citet{InoueKilian02} and \citet{Mikusheva12}.

Figure \ref{Example_Intro} illustrates potential consequences of rank uncertainty for the construction of inference in practice. Displayed in the left panel are confidence intervals for output responses to a government spending shock identified as in \citet{BlanchardPerotti02} for all possible numbers of cointegration relations.\footnote{The VAR specification and the data are described in Section \ref{sec:fis}.} Clearly, the assessment of the effectiveness of the spending policy varies drastically with the chosen cointegration rank, indicating that choosing the wrong rank hampers the interpretation of results -- for long but equally so for short horizons. One could argue that with proper rank estimation, the most appropriate of these intervals can be selected. However, as demonstrated in the right panel, if evidence for a particular rank is weak, different but equally well established ``respectable'' rank selection procedures may suggest different models, providing little guidance for the applied researcher.

Finally, note that the unrestricted VAR in levels gives substantially different (and narrower) intervals than the VAR models with reduced rank, even the model with the next highest rank ($r=9$). Of course, if the true model is indeed a VAR of full rank, all variables are stationary and no (co)integration would be present. However, many macroeconomic series exhibit persistent behavior, which may be caused by stochastic trends. Indeed, ADF tests cannot reject a unit root for most series in our dataset, casting doubt on whether the levels specification is indeed the most appropriate one. If the series are really cointegrated, a reduced-rank VAR model would be more appropriate and constructing inference based on the VAR in levels would be invalid for long horizons. In practice, distinguishing long from short (or medium) horizons is difficult, and as we show in the simulations, for sample sizes compared to this particular example, inference based on the VAR in levels becomes inaccurate at fairly short horizons already.

As Figure 1 shows, the imposed rank matters for the interpretation of the results, and a ``robust'' decision to use the VAR in levels could, in this example, lead to a misguided interpretation of the IRs. The strategy to use the VAR in levels based on a robustness argument therefore appears questionable, while rank selection techniques also do not appear to give conclusive answers. It is therefore crucial to take rank uncertainty into account when conducting inference for impulse responses.

\begin{figure}
\begin{center}
\includegraphics[width=1\linewidth]{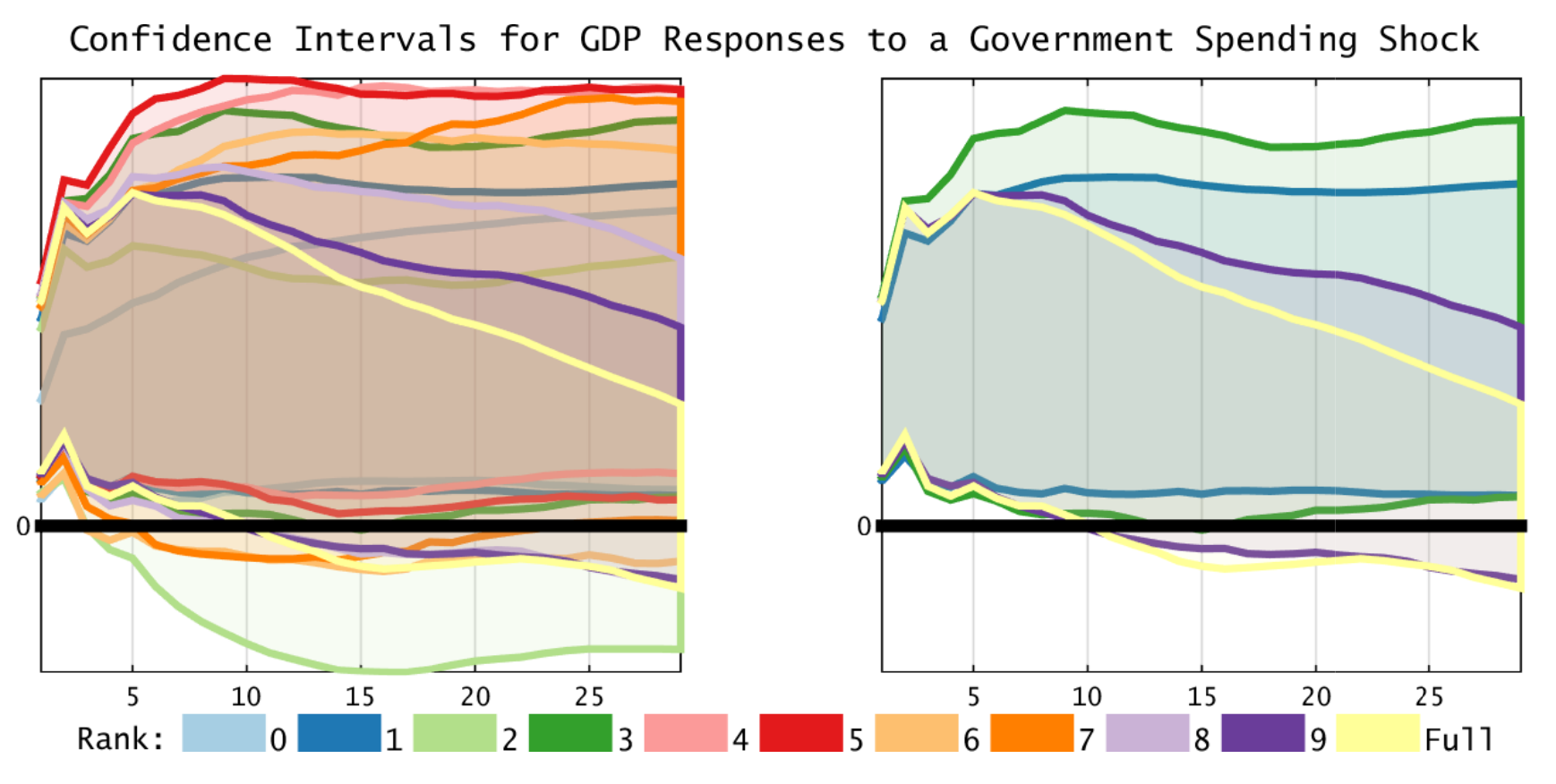}
\caption{Left panel: Bootstrap 95\% confidence intervals of the output response to a government spending shock for every rank specification. Right panel: Bootstrap 95\% confidence intervals of the output response to a government spending shock implied by the trace test ($r=3$), AIC ($r=9$), BIC ($r=1$), and the unrestricted VAR}
\label{Example_Intro}
\end{center}
\end{figure}

\section{Inference Accounting for Rank Uncertainty} \label{sec:r_unc}

In this section we discuss several ways of accounting for rank uncertainty, first utilizing existing methods from the model uncertainty literature, before discussing a new principle.

\subsection{Adaptations of Existing Model Uncertainty Methods} \label{sec:ada}
The perils of ignoring model uncertainty when performing model selection are well known in the statistical literature about model selection. For instance, in a sequence of papers, Leeb and P\"otscher \citep[see for example][]{LeebPoetscher05} highlight the risk of treating a selected model as a known and correct when performing inference, pointing out that even consistent model selection is no justification for treating the selected model as known. While this post-model selection inference problem is hard to solve, various methods have been proposed to at least mitigate the problem. Here we highlight some of these methods and show how they can be adapted to the problem at hand. We stress though that, although they are regularly used in practice to account for model uncertainty, none of these methods are formally shown to deliver valid post-model selection inference.

The most straightforward way, and our baseline benchmark, to deal with rank uncertainty is to pre-estimate the rank, and then perform inference for the impulse responses conditional on the estimated rank. While this seems, given the discussion in the previous section, not always an advisable strategy, rank estimation underlies many of the methods considered afterwards. We therefore first discuss how to perform rank estimation and how it can be seen as a model selection problem.

Let the function $M_r(Y_T): Y_T \mapsto \{0,1\ldots,K\}$ be a rank selection procedure that determines the cointegration rank based on the sample $Y_T = (y_1, \ldots, y_T)^\prime$. Then the estimated rank $\hat{r}$ can be imposed in the VECM estimation to obtain the estimated impulse responses of interest as $\hat{\zeta}^{(\hat{r})} = \bar{f} (\hat{\theta}^{(\hat{r})})$,where $\hat{r} = M_r (Y_T)$.

Several methods can be considered in practice for estimation of the rank. The most common is to perform a sequence of sequential tests in the likelihood framework of \citet{Johansen95}, in particular using the trace or eigenvalue test statistics. Instead of the standard critical values, one can also use one of its many bootstrap extensions \citep{CRT10ET,CRT10JoE,CRT12,Swensen06}. Either way, due to the nature of hypothesis testing, this estimation strategy will not lead to consistent estimation of the rank (unless the significance level is chosen to decrease with sample size); the probability of selecting a rank that is too high converges to the chosen significance level instead of to zero.

Alternatively, one can use an information criterion as proposed by \citet{Phillips96}, \citet{ChaoPhillips99}, \citet{ChengPhillips09} and \citet{ChengPhillips12}. This has two advantages compared to the sequential testing approach. First, rank selection and lag length selection can be done in a single step. Second, depending on the penalty function chosen in the information criterion, it is possible to estimate the rank consistently. A recent alternative is provided by \citet{LiaoPhillips15} who propose to select the rank and lag length simultaneously by penalized reduced rank regression. An advantage of this approach is that model selection and estimation are performed simultaneously, thus needing only a single step for the full estimation from start to end.

Irrespective of the chosen selection method, standard inference is based on the selected rank, treating it as known. This is often justified by the consistency of the rank selection method, but even in those cases where it is indeed consistent, ignoring the selection step leads to invalid inference as referred to earlier \citep{LeebPoetscher05}. In particular if the data do not provide clear and strong evidence for one particular cointegrating rank, this approach will fail to deliver reliable confidence intervals. We therefore next consider methods that explicitly take rank uncertainty into account in the inference procedure.

\subsubsection{Endogenous Rank Selection} \label{sec:r_end}
\citet{Kilian98b} proposes the \textit{endogenous lag selection} bootstrap method for autoregressive models where the autoregressive lag length is re-estimated within the bootstrap to account for the model selection uncertainty. We adapt his approach to rank selection, labeling this approach \textit{Bootstrap Endogenous Rank Selection (BERS)}. That is, after generating a bootstrap sample $\{y_t^*\}_{t=1^T}$ with rank $r$, we re-estimate the rank from this bootstrap sample to estimate the bootstrap impulse responses.\footnote{Details for our implementation are given in Algorithm \ref{alg:b_ers}.}

We can choose to generate the bootstrap sample $\{y_t^*\}_{t=1^T}$ with the ``neutral'' maximum rank $K$ or the estimated rank $\hat{r}$. While \citet{Kilian98b} reports that this choice has little consequence for lag selection, this is very different for rank selection. After all, if the rank used to generate $\{y_t^*\}_{t=1^T}$ is not correct, we still face all the problems with the bootstrap as we described before. Hence, while some rank uncertainty is taken into account, the validity of this approach still hinges on the correct rank being used for the generation of the bootstrap data, which as we argued before, is impossible to guarantee.

\subsubsection{Model Averaging} \label{sec:ma}
One of the most popular approaches to account for model uncertainty is to use model averaging \citep{HjortClaeskens03}. By combining estimators from different models (and potentially weighting by evidence for these models), model uncertainty is taken into account. Given that the decision of which model to use is discrete, and therefore the selected model may change abruptly for a slight variation in the sample, the resulting estimators after model selection may be quite unstable and exhibit a large variability. By constructing weighted averages of the estimators arising from the individual models, one smoothes out the changes in the estimator, resulting in more stable estimators that typically display lower variability.

Given rank-specific impulse response estimators $\hat{\zeta}^{(0)}, \ldots, \hat{\zeta}^{(K)}$, we define the \textit{Model Averaging (MA)} impulse response estimator 
\begin{equation} \label{eq:est_w}
\hat{\zeta}^{MA} = \sum_{r=0}^K W_{K} (r) \hat{\zeta}^{(r)}, \qquad \text{where} \qquad W_{K} (r) = \frac{ W(Y_T,r)}{\sum_{s=0}^K W(Y_T,s)}
\end{equation}
and $W(Y_T,r)$ is a function that determines a weight for rank $r$ based on the sample $Y_T$. Unlike the typical application of model averaging, which often focuses on improving accuracy of point estimators in a mean squared error sense, we are not interested in the averaged point estimators. Instead, we only take the MA estimator as an input into our bootstrap scheme in order to construct confidence intervals: By using the more stable MA estimator, we may hope that the confidence intervals are more robust to rank misspecification. The bootstrap scheme can straightforwardly be adapted to incorporate this estimator after generating the bootstrap sample $\{y_t^*\}_{t=1}^T$.

Typical weights in the model averaging literature are exponential weights based on information criteria such as BIC. However, in our simulations we find that such standard weighting schemes give weights that are too close to each other and do not differ much from simple unweighted averages. Given the widely varying behavior of impulse responses under different ranks, such weights are therefore not the most useful ones in our setting. Instead, we advocate using weights that are derived directly from cointegration tests, following the spirit of \citet{SobreiraNunes12}, but rather than their KPSS type weights, we opt for weights based on the trace test statistic proposed by \citet{Johansen95}. Details about the weights and their properties can be found in Lemma \ref{lem:weights} in Section \ref{sec:wimp_av}.

In a similar framework, \citet{JMP13} propose an averaging approach for impulse responses of potentially cointegrated VAR models based on a very specific set of weights. While they allow for uncertainty regarding the order of integration, their approach only averages two estimators: the one obtained from the VAR in levels, and one obtained from a cointegrated VAR where the number of cointegrating relations is pre-determined by pre-testing or economic theory. It can therefore not account for the general case where we are agnostic about the number of cointegration relations.

While such model averaging explicitly takes model uncertainty into account, it still relies on an explicit choice of the cointegration rank in the bootstrap algorithm to do inference. Hence, even while the weight construction can be endogenized in the bootstrap in the same way as for rank selection, the bootstrap DGP relies on the choice of a single cointegration rank. As such it still does not fully account for rank uncertainty in our context.

\subsubsection{Bagging} \label{sec:bag}
We now take a first step in endogenizing the rank uncertainty in the bootstrap DGP itself, by bootstrapping a bagging estimator. The bagging estimator is constructed by averaging the bootstrap estimates over an initial bootstrap procedure in which the cointegration rank is re-estimated for every bootstrap sample. Bagging was originally proposed by \citet{Breiman96} to improve estimation accuracy of unstable estimators. \citet{BuhlmannYu02} analyzed bagging formally and found that it can lead to a variance reduction of estimation after hard decisions, such as an initial model selection. As the model averaging described above, bagging smoothes those hard decisions yielding more accurate estimators. \citet{Efron14} considers bagging in the context of post-selection inference, rather than point estimation, and we build on his approach here.

As bagging is essentially the simulation equivalent of model averaging, with the weights implicitly determined by how often each rank is selected within the bootstrap, it is subject to the same critique. However, one can modify the bagging algorithm to endogenize rank uncertainty in the bootstrap DGP by performing a second-level bootstrap in which we draw new bootstrap samples from the first-level bootstrap samples. By determining the rank of the second-level bootstrap DGPs from the first-level bootstrap samples, the ranks are randomized according to their evidence in the (simulated) sample. This allows to take the uncertainty into account when constructing the bootstrap confidence intervals based on the second-level bootstrap samples. While this does not fully solve the bootstrap invalidity problem (bootstrap samples are still generated under incorrect ranks, especially in the first step), the method has the potential to alleviate the problem.

There is a computational problem with this method though, as one has $B_1$ iterations in the first bootstrap and $B_2$ in each second-level bootstrap, such that a full double bootstrap requires $B_1 (1+B_2)$ iterations which quickly becomes computationally infeasible. To circumvent this problem, we implement the Fast Double Bootstrap (FDB) developed by \citet{DavidsonMacKinnon02}, which requires drawing only a single second-level bootstrap sample for every first-level bootstrap sample. That is, the computation cost of the FDB is only double ($2 B_1$) that of a regular bootstrap. Algorithm \ref{alg:b_fdb} describes the method, labeled as \textit{FDB bagging (FDBb)}, in detail.

\subsection{Weighted Inference by Model Plausibility} \label{sec:wimp}
None of the methods described above fully address the post-model selection inference problem. To work towards a more satisfactory solution, we now combine the ideas discussed above with new concepts arising from the recent statistical literature that directly addresses the post-model selection inference problem.

We would like to build on the idea of averaging or weighting models to account for rank uncertainty. However, as elaborated on in the previous section, such weighting is typically designed for point estimation and translating it to confidence intervals, as needed here, is not straightforward. In order to make the transition, we take inspiration from the perspective taken by \citet{PoSI13}, who view the issue of constructing valid post-model selection inference (PoSI) as a simultaneous inference problem: by controlling for performing inference in all models simultaneously, the specific model selected by a model selection procedure is covered by construction. This would involve finding lower and upper bounds $L^{\PoSI} (\gamma)$ and $U^{\PoSI}(\gamma)$ to construct intervals $\left[L^{\PoSI} (\gamma), U^{\PoSI} (\gamma)\right]$ such that $\Prob \left( L^{\PoSI} (\gamma) \leq \zeta^{(r)} \leq U^{\PoSI} (\gamma), \quad \forall r \in\{0,1,\ldots,K\} \right) \rightarrow 1 - \gamma$ as $T \rightarrow \infty$. Note that $\zeta^{(r)} = \bar{f} (\theta^{(r)})$ is a \emph{pseudo-true} parameter defined in terms of $\theta^{(r)}$, the pseudo-true parameters of the model \eqref{eq:VAR} under the restriction that rank $r$ is imposed -- see Lemma 1 and its proof in \citet{CRT12} for a formal definition. These parameters represent the probability limits of the estimators of \eqref{eq:VAR} under the restriction of imposing rank $r$, and can informally be seen as those parameters which minimize a distance to the true parameters under the restriction that the cointegration rank is $r$. If $r < r_0$, the true parameter cannot be recovered, and therefore the pseudo-true parameter will be different.

For our purposes, there is a fundamental problem with the \emph{sub-model} view of \citet{PoSI13} where the pseudo-true parameters are the objects of interests, as also highlighted by \citet{LPE15}. In the context of structural impulse responses, the sub-model view has little relevance, as it cannot uncover any structural effects. We therefore need the \emph{full model} view, in which it is assumed that one of the models is the true (structural) one. Denoting this extended PoSI approach as $\PoSI_0$, we seek to control $\Prob \left( L^{\PoSI_0} (\gamma) \leq \zeta \leq U^{\PoSI_0} (\gamma), \quad \forall r \in\{0,1,\ldots,K\} \right) \rightarrow 1 - \gamma$ as $T \rightarrow \infty$. As the interval bounds are typically constructed by considering the distribution of the fixed-rank estimator $\hat{\zeta}^{(r)}$ minus the (pseudo-)true value, this approach requires that the distance between every fixed-rank estimate $\hat{\zeta}^{(r)}$ and the true impulse response $\zeta$ is accounted for, rather than the much shorter distance between $\hat{\zeta}^{(r)}$ and its probability limit or pseudo-true impulse response $\zeta^{(r)}$. This will therefore result in rather wide intervals. The seemingly only way to control this quantity is to construct confidence intervals for every rank separately, and then take the union of these, which typically results in very wide intervals that are useless in practice.

However, we have not yet considered any evidence on the plausibility of each rank, that can be extracted from the data. If this information can incorporated into our inferential procedure, we may be able to achieve intervals that are still useful in applications, as the impact of ranks that the data deem very implausible can be eliminated, or at least reduced. We therefore augment the PoSI view of simultaneous inference by a weighting scheme akin to model averaging, except that we apply the weighting not to the estimators but directly to the bounds of the intervals. The direct weighting of the inference output, in this case the interval bounds, by evidence of the plausibility of each model, leads us to label our approach as \textit{Weighted Inference by Model Plausibility (WIMP)}.

\subsubsection{The WIMP Principle} \label{sec:wimp_p}

Define the most plausible model - according to a certain plausibility measure based on the data - as the \emph{reference model}, and denote the corresponding confidence interval arising from this model (ignoring model uncertainty) as the \emph{reference interval}. As input to the WIMP procedure we consider all \emph{model intervals}, which are defined as the confidence intervals obtained by assuming any particular model as the true one. In our case these would be the intervals obtained by imposing all the $K+1$ different cointegrating ranks. Before going into the details of our application, we now propose a set of general conditions that a ``prudent'' WIMP scheme should adhere to: \medskip \\
\noindent \textbf{WIMP Prudence Conditions}
\begin{enumerate}
\item The WIMP confidence interval must always cover at least the reference interval. That is, any non-reference model can only lead to widening the WIMP interval compared to the reference interval.
\item If two models are equally plausible, the model interval bounds which are furthest away from the reference model must contribute the most to widening the WIMP interval.
\item If the bounds of two model intervals are equally far away from the reference interval, the most plausible model must contribute the most to widening the WIMP interval for a given distance of the bounds from the reference interval.
\item The WIMP confidence interval may not be wider than the interval obtained by joining all individual model intervals.
\end{enumerate}

\medskip \noindent
The first condition is needed to avoid invalid intervals, in whatever way validity is measured. If obtaining a confidence interval which is more narrow than the ``standard'' interval assuming no model uncertainty is possible, the WIMP interval is unlikely to contain an adequate coverage probability. The second condition ensures that the locations of intervals in relation to the reference interval are properly taken into account for equally plausible models. Compare two equally plausible models with almost identical intervals, to two equally plausible models with very different intervals. Any prudent method of accounting for model uncertainty must result in wider intervals for the second case than for the first case. The third condition implies that plausible models are more strongly taken into account than implausible models. In particular, this condition allows to reduce the impact of implausible models that may have very different intervals than the reference model but are so implausible, that there is little to no uncertainty about them. Finally, the fourth condition ensures that the WIMP intervals do not become too conservative. While the first and fourth condition impose hard (but sensible) restrictions on the WIMP intervals, the second and third conditions allow for variation in the procedure. Finding a right balance between conservatism and interval length is therefore of great practical importance, and varies per setting.

For our specific implementation of the WIMP Prudence Conditions, let $W_{K}(r)$ be model plausibility weights assigned to all ranks $r=0,\ldots,K$ and define $X(r,s) = \frac{W_K(r)}{W_K(s)}$ as the relative plausibility of rank $r$ compared to rank $s$. Letting $R = \argmax_{0 \leq r \leq K} W_K(r)$ be the (most plausible) reference rank, we define the WIMP interval $\left[L^{\WIMP}(\gamma), U^{\WIMP} (\gamma) \right]$ as
\begin{equation} \label{eq:ci_sposi}
\begin{split}
L^{\WIMP}(\gamma) &= \min_{r=0,\ldots,K} \left\{L^{(R)} (\gamma) - X(r,R) \left[ L^{(r)}(\gamma) - L^{(R)}(\gamma) \right]^- \right\},\\
U^{\WIMP}(\gamma) &= \max_{r=0,\ldots,K} \left\{U^{(R)} (\gamma)+ X(r,R) \left[U^{(r)}(\gamma) - U^{(R)}(\gamma) \right]^+ \right\},
\end{split}
\end{equation}
where $x^+ = \max(x,0)$, $x^- = -\min(x,0)$ and $L^{(r)}(\gamma)$ and $U^{(r)}(\gamma)$ are the lower and upper bounds respectively of the confidence intervals with fixed rank $r$.

The term $\left[ L^{(r)}(\gamma) - L^{(R)}(\gamma) \right]^-$ (respectively $\left[U^{(r)}(\gamma) - U^{(R)}(\gamma) \right]^+$) ensures that only lower bounds smaller (upper bounds larger) than those of the reference interval are taken into account; for lower bounds larger (upper bounds smaller) than those of the reference interval, this term is simply zero. Together with $X(r,s) \geq 0$, this implies that the WIMP interval always contains the reference interval, hence Condition 1 is satisfied. Condition 2 is also trivially satisfied as this term increases when the lower (upper) bound of the rank $r$ interval is further away from the reference interval.

The shape of $X(r,s)$ determines how strongly less plausible models are taken into account and can be different from the linear function of $W_K(r)$ imposed above. As long as $X(r,s)$ is an increasing function of $W_K(r)$, more plausible ranks are given more importance and Condition 3 is satisfied; varying $X(r,s)$ and $W_K(r)$ allows one to change the balance between conservatism and interval length. Finally, with respect to Condition 4, note that as long as $X(r,s) \leq 1$, the WIMP interval can never be wider than the interval obtained by combining the smallest lower bound with the largest upper bound.\footnote{If some of the individual model intervals are disjoint, the ``maximal'' WIMP interval as constructed in \eqref{eq:ci_sposi} is larger than the union of these intervals, apparently violating Condition 4. It is a matter of personal preference whether to consider disjoint intervals or to ``fill the gaps'' and extend it from the lowest lower bound to the highest upper bound, which is exactly what the WIMP construction described above does automatically. As we believe that such a disjointed confidence \emph{set}, which is not a confidence \emph{interval} anymore, can be rather difficult to interpret, we consider this modification, though it is by no means crucial to the WIMP approach.}

\begin{remark}
Although we focus here exclusively on the case of rank uncertainty, other types as uncertainty, such as about the lag order or the deterministic components can be incorporated into the WIMP procedure as well. For instance, if one wants to allow for $P$ different lag orders in addition to the $K+1$ ranks, one needs weights that measure the plausibility of each of the $(K+1)P$ different models resulting from combining the different ranks and lag orders. In this paper we focus on rank uncertainty only as it has a far bigger and more fundamental impact than (slight) lag misspecification. Uncertainty about the deterministic specification is typically a bigger issue, but due to our initial detrending all consequent analysis (including the statistics used to construct $W_K(r)$) are invariant to the deterministic specification (also see Remark \ref{rem:detr}), and we can separate the two sources of uncertainty.
\end{remark}

\begin{remark}
The WIMP intervals are not built directly around a single point estimator for $\zeta$. While all $K+1$ fixed-rank estimators are incorporated through their respective confidence intervals, we do not directly obtain a corresponding point estimate for $\zeta$. Of course, if there is a desire to pair the confidence interval with a point estimator, one can do so, in which case the model averaging estimator with the same weights $W_K(\cdot)$ as used for the WIMP intervals is the most natural candidate.\footnote{As expected from the model averaging literature, unreported simulations in the same setup as considered in Section \ref{sec:sim} show that this estimator performs very well in terms of mean squared error when compared to fixed-rank estimators. Of course, its performance purely as a point estimator is different from its performance as basis for inference, as we shall see in Section \ref{sec:sim}.}
\end{remark}

\subsubsection{Asymptotic Properties} \label{sec:wimp_av}
To complete our theoretical discussion of the WIMP method, we establish some basic asymptotic properties of the WIMP intervals. We mainly do so under general high-level assumptions on the tests and bootstrap method available, but we will also provide some details about how these assumptions can be verified in our application. We first characterize the general asymptotic properties of our method.

\begin{theorem} \label{th:av_pw}
Let $Y_T$ be generated according to \eqref{eq:VAR}, and let $\Theta^{(r)}$ denote the parameter space of $\theta$ such that the $I(1,r)$ conditions are satisfied. Then assume that
\begin{enumerate}[(i)]
\item As $T \rightarrow \infty$, $\Prob \left(W_{K}(r_0) \geq W_K(r)\right) \rightarrow 1$ for all $r \neq r_0$;
\item As $T\rightarrow \infty$, it holds that
\vspace{-0.5cm}
\begin{equation*}
\Prob \left(L^{(r_0)} (\gamma) \leq \zeta \leq U^{(r_0)}(\gamma) \right) \rightarrow 1 - \gamma, \quad \text{for all } \theta \in \Theta^{(r_0)} \text{ and } r_0 \in \{0,1,\ldots,K\}.
\end{equation*}
\end{enumerate}
\vspace{-0.5cm}
Then, as $T \rightarrow \infty$,
\vspace{-0.5cm}
\begin{equation*}
\Prob \left( L^{\WIMP} (\gamma) \leq \zeta \leq U^{\WIMP} (\gamma) \right) \geq 1 - \gamma +o(1), \quad \text{for all } \theta \in \Theta^{(r_0)} \text{ and } r_0 \in \{0,1,\ldots,K\}.
\end{equation*}
\end{theorem}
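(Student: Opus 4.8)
The plan is to exploit assumption (i) to show that, asymptotically, the reference rank $R$ coincides with the true rank $r_0$ and the WIMP construction in \eqref{eq:ci_sposi} collapses onto the fixed-rank-$r_0$ interval, whose validity is exactly assumption (ii). Fixing $r_0$ and $\theta \in \Theta^{(r_0)}$ (the claim is pointwise), the first step is to establish $\Prob(R = r_0) \to 1$. By (i) we have $W_K(r_0) \xrightarrow{p} 1$ and $W_K(r) \xrightarrow{p} 0$ for each $r \neq r_0$, so $W_K(r_0) - W_K(r) \xrightarrow{p} 1 > 0$ and hence $\Prob(W_K(r_0) \leq W_K(r)) \to 0$; a union bound over the finitely many $r \neq r_0$ then gives $\Prob(R \neq r_0) \to 0$.

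Second, I would show that on the event $\{R = r_0\}$ the WIMP bounds differ from the reference bounds only by an $o_p(1)$ term. There the relative plausibility is $X(r,R) = W_K(r)/W_K(r_0) \xrightarrow{p} 0$ for every $r \neq r_0$, while the summand indexed by $r = R$ in \eqref{eq:ci_sposi} is identically zero because $[L^{(R)}(\gamma) - L^{(R)}(\gamma)]^- = 0$. Provided each fixed-rank bound is $O_p(1)$ — which holds since the fixed-rank estimators $\hat{\zeta}^{(r)}$ converge, at worst in distribution, so the bootstrap-quantile bounds are bounded in probability — every product $X(r,R)[L^{(r)}(\gamma)-L^{(R)}(\gamma)]^-$ and $X(r,R)[U^{(r)}(\gamma)-U^{(R)}(\gamma)]^+$ is $o_p(1)$. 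Minimizing and maximizing over the finite rank set then yields $\Delta_L := L^{(r_0)}(\gamma) - L^{\WIMP}(\gamma) \xrightarrow{p} 0$ and $\Delta_U := U^{\WIMP}(\gamma) - U^{(r_0)}(\gamma) \xrightarrow{p} 0$, both nonnegative by construction.

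Third, I would transfer the coverage. Since the WIMP interval always contains the reference interval, on $\{R = r_0\}$ it contains $[L^{(r_0)}(\gamma), U^{(r_0)}(\gamma)]$, so its coverage is at least that of the rank-$r_0$ interval; combined with (ii) and Step 1 this gives $\liminf_T \Prob(L^{\WIMP} \leq \zeta \leq U^{\WIMP}) \geq 1-\gamma$. For the matching upper bound, on $\{R = r_0\}$ the symmetric difference between the WIMP and the rank-$r_0$ coverage events lies in $\{0 < L^{(r_0)}(\gamma) - \zeta \leq \Delta_L\} \cup \{0 < \zeta - U^{(r_0)}(\gamma) \leq \Delta_U\}$; bounding the first probability by $\Prob(0 < L^{(r_0)}(\gamma) - \zeta \leq \epsilon) + \Prob(\Delta_L > \epsilon)$ for fixed $\epsilon$, the second term vanishes by Step 2 and then letting $\epsilon \downarrow 0$ kills the first, so the widening contributes asymptotically negligible coverage and the limit is exactly $1-\gamma$.

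The main obstacle is this final exactness argument. The containment of Step 3 delivers only conservative validity ($\geq 1-\gamma$) for free; upgrading to the exact limit $1-\gamma$ requires that the limiting law of $L^{(r_0)}(\gamma) - \zeta$ (and of $U^{(r_0)}(\gamma) - \zeta$) place no atom at zero, so that shrinking $\Delta_L, \Delta_U$ to zero cannot trap a non-vanishing amount of probability exactly at the interval boundary. This non-atomicity is the genuine additional ingredient beyond (i)–(ii), to be read off from the continuous limiting distribution underpinning the percentile interval in (ii); the $O_p(1)$ control of the non-reference bounds used in Step 2 is a second, milder regularity point, most delicate for over-specified ranks $r > r_0$ at long horizons where $\hat{\zeta}^{(r)}$ is random yet still $O_p(1)$.
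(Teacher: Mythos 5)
Your overall route is the same as the paper's: assumption (i) gives $\Prob(R=r_0)\to 1$ and $X(r,R)\xrightarrow{p}\mathbbm{1}(r=r_0)$, the WIMP bounds then collapse onto the fixed-rank-$r_0$ bounds, and coverage is transferred via assumption (ii). Your Steps 1 and 2 reproduce this (the paper's proof is essentially three lines), and your explicit observation that the collapse requires the non-reference brackets $\left[L^{(r)}(\gamma)-L^{(R)}(\gamma)\right]^-$ to be $O_p(1)$ is a point the paper leaves implicit, so up to there you are if anything more careful than the original.

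The genuine problem is Step 3, the upgrade from conservative ($\geq 1-\gamma$) to exact coverage. The bound $\Prob\left(0< L^{(r_0)}(\gamma)-\zeta\le \Delta_L\right)\le \Prob\left(0< L^{(r_0)}(\gamma)-\zeta\le\epsilon\right)+\Prob\left(\Delta_L>\epsilon\right)$ with fixed $\epsilon$ cannot deliver the result: since $\hat{\zeta}^{(r_0)}$ is consistent and the bootstrap quantiles shrink, $L^{(r_0)}(\gamma)-\zeta\xrightarrow{p}0$ itself, so for every fixed $\epsilon>0$ the first term converges to $\Prob\left(L^{(r_0)}(\gamma)>\zeta\right)$, the lower-tail miscoverage (roughly $\gamma/2$). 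This limit in $T$ no longer depends on $\epsilon$, so letting $\epsilon\downarrow 0$ afterwards does not ``kill'' it. Equivalently, the unnormalized limiting law of $L^{(r_0)}(\gamma)-\zeta$ is a point mass at zero, so the non-atomicity you invoke is vacuous at that scale. A correct argument must work at the rate at which $L^{(r_0)}(\gamma)-\zeta$ has a nondegenerate continuous limit, say $a_T\left(L^{(r_0)}(\gamma)-\zeta\right)\xrightarrow{d}G$ with $a_T\to\infty$ (typically $a_T=\sqrt{T}$); exactness then needs $a_T\Delta_L\xrightarrow{p}0$, i.e.\ $a_T X(r,R)\left[L^{(r)}(\gamma)-L^{(R)}(\gamma)\right]^-\xrightarrow{p}0$, which is a rate condition on the weights that does \emph{not} follow from assumption (i) alone (weights with $X(r,R)\asymp T^{-c_2}$ and brackets bounded away from zero satisfy (i) yet violate it whenever $a_T T^{-c_2}\not\to 0$). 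To be fair, the paper's own proof buries exactly the same issue in the phrase ``the result then follows from assumption (ii)'': convergence in probability of interval endpoints does not by itself transfer exact coverage probabilities. Your instinct that an ingredient beyond (i)--(ii) is needed is right; but the ingredient is a scaling condition linking the weights' decay rate to the interval's convergence rate, not non-atomicity of the unnormalized bound.
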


Theorem \ref{th:av_pw} establishes the asymptotic conservativeness of the WIMP intervals under two assumptions. First, (i) requires that the weight attached to the true rank is asymptotically at least as large as the weight of the other ranks. This requires that a ``decent'', yet not necessarily consistent, procedure is used to obtain the weights. Equal weights satisfy this condition, but will lead to too conservative intervals as they would result in taking the union of all rank $r$ intervals. Note that if condition (i) is strengthened to require the true weight to receive the full weight asymptotically, corresponding to using a consistent rank selection approach, the WIMP interval is not conservative anymore but has the appropriate (pointwise) coverage rate.

Assumption (ii) implies pointwise asymptotic validity of the intervals under a known rank, which has been verified for many bootstrap methods under different assumptions on $\{u_t\}$ (or equivalently $\{\varepsilon_t\}$). For instance, if we assume that $\{u_t\}$ is i.i.d.~with sufficiently many moments existing, one can show that the i.i.d.~bootstrap version of Algorithm \ref{alg:b} satisfies assumption (ii), c.f.~\citet{Kilian98} and \citet{CRT12}. \citet{InoueKilian16} also formulate general assumptions to assure bootstrap validity, while alternative methods that allow for heteroskedasticity are considered by \citet{BJT16}. The WIMP principle can be applied to any of these - or other - methods.

We now propose a simple weighting scheme and consider its asymptotic properties. Following the spirit of \citet{SobreiraNunes12}, we base our weights on cointegration tests. Rather than their KPSS type weights, we opt for weights based on the trace test statistic proposed by \citet{Johansen95}, which, as a ``standard'' cointegration test, has intuitive appeal and is available in all standard econometric and statistical software.\footnote{We also explored \citeauthor{Johansen95}'s \citeyearpar{Johansen95} maximum eigenvalue test statistic, which similarly satisfies assumption (i) in Theorem \ref{th:av_pw}. Numerical experiments showed virtually no difference with the trace test.}

\begin{proposition} \label{lem:weights}
Let $J_T(r) = -T \sum_{i=r+1}^K \ln (1 - \hat{\lambda}_i)$ denote the trace test of \citet{Johansen95} for testing $H_0: r_0 \leq r$. For constants $c_1>0$ and $0<c_2<1$, define
\begin{equation} \label{eq:weights}
\begin{array}{ll}
W(Y_T,r) = e^{-c_1 T^{-c_2} J_T(r)} &\quad \text{for } r = 0.\\
W(Y_T,r) = e^{-c_1 T^{-c_2} J_T(r)} - e^{-c_1 T^{-c_2} J_T(r-1)} &\quad \text{for } r=1,\ldots, K-1, \\
W(Y_T,r) = 1-e^{-c_1 T^{-c_2} J_T(r-1)} &\quad \text{for } r = K,
\end{array}
\end{equation}
and $W_{K} (r) = W(Y_T,r)/ \sum_{r=0}^K W(Y_T,r)$. Then $W_{K}(r) \xrightarrow{p} \mathbbm{1} (r=r_0)$ as $T \rightarrow \infty$.
\end{proposition}

Note that our weights ensure that the true rank asymptotically receives a weight of one, which is stronger than required in Assumption (i). This implies that using these weights, the WIMP intervals are not conservative asymptotically. In practice one faces the trade-off between the desired robustness to model uncertainty and the width of the resulting intervals. However, we stress that changing the constants $c_1$ and $c_2$, may lead to very different small sample properties -- even though the asymptotic properties remain unaffected. Therefore it remains crucial to investigate the small sample properties of the chosen approach. This we do in the next sections.

\begin{remark}
While the results above establish pointwise asymptotic validity, this does not imply validity uniformly over the parameter space.\footnote{Note that our notion of uniform and pointwise validity is conceptually different from the notion occasionally encountered in the impulse response literature, such as \citet{LSW15} and \citet{InoueKilian16}. In those papers, ``pointwise'' relates to inference on a single impulse response, whereas uniform or joint confidence bands are valid for a set of impulse responses. Our notion of uniform and pointwise relates to to the parameter space $\Theta$, and applies to both inference on single responses and joint inference on a set of responses. Methods establishing joint coverage are as sensitive to rank uncertainty as methods for single impulse responses, and our arguments apply equally well to these methods.}

Uniform validity is a more informative property about finite sample behavior of the intervals, as it explicitly accounts for ``small'' parameters, such as roots that are local to one. While one may expect that Assumption (i) helps to establish uniform validity by not relying on the \textit{oracle property} that the true rank is always selected asymptotically, one would also need a uniform version of Assumption (ii). While this would allow us to formulate Theorem \ref{th:av_pw} in a uniform sense, we do not do so as it would hide the fact that the current state of the (bootstrap) literature does not provide general bootstrap approaches that are uniformly valid in a general sense. To the best of our knowledge, uniform results have only been established in the presence of a single local-to-unit root \citep[cf.][]{Mikusheva07,Mikusheva12}, while our setting would require validity under an \emph{arbitrary} number of roots near unity. While clearly of great interest, developing appropriate bootstrap methods would require a separate study that is outside the scope of the paper and is therefore left for future research.
\end{remark}

\section{Monte Carlo Simulations} \label{sec:sim}
In this section we investigate the performance of the various methods discussed above by simulation. We assess coverage probabilities (CP) of confidence bands for \textit{forecast error impulse responses}, and hence evaluate intervals for the moving average parameters. We intentionally abstract from the identification problem in structural VARs, since the structural moving average parameters are linear combinations of their reduced-form counterparts, and one can expect that the performance of one inferential procedure for reduced-form parameters is inherited by the structural parameters.\footnote{Except for SVARs identified through long-run restrictions, the exact persistence properties of the underlying reduced-form process are of no direct relevance for identification.}
The data generating process (DGP) for the Monte Carlo experiment is a three-dimensional VAR of order one inspired by \citet{Phillips98}, given by $y_t = (I_{3} +\Pi)y_{t-1}+\epsilon_t$, with $\epsilon_t \sim i.i.d.~\mathcal{N}(0,I_3)$ for all $t$. The cointegration matrix is specified as $\Pi=d_1 \alpha_1\beta_1' + d_2\alpha_2\beta_2'$, where $\alpha_1=(0,1,0)'$, $\alpha_2=(0,0,1)'$, $\beta_1=(2,-1,0)'$,  and $\beta_2=(1,-1,-1)'$. We consider two versions of the above process when simulating data.
\textbf{DGP1} features two ``\textit{weak}'' cointegration relations by setting $d_1=0.05$ and $d_2=0.02$, which implies that the model has one root at unity and two roots close to one at $0.98$ and $0.95$. \textbf{DGP2} features two ``\textit{strong}'' cointegration relations by setting $d_1=d_2=1$, which implies a VAR with one unit root and two roots at zero. This is the original setting considered by \citet{Phillips98}.

We evaluate CPs of $95\%$ confidence intervals for each response and horizon ($h=1,2,...,60$) for $T=100, 200$. The results are based on 1000 MC simulations and 399 bootstrap replications. To compute the WIMP intervals we set $c_1 = 1$ and $c_2=0.5$ for the weights in \eqref{eq:weights}.\footnote{This choice of parameters seems to be natural for the weights in \eqref{eq:weights}. We did not experiment with changing these values, as the performance in the simulations was already quite satisfactory. It is likely that by careful tuning these parameters, even better performance can be obtained. However, the optimal choice will typically be highly case-dependent, and optimal values should therefore be treated with caution. Instead we prefer to report results for a natural albeit naive choice of parameters without claiming any optimality.} We abstract from lag length selection (we fix $p=1$), deterministic components, and small sample bias correction \citep{Kilian98}. All simulations were done in MATLAB.

\begin{figure}
\begin{center}
\includegraphics[width=1\linewidth]{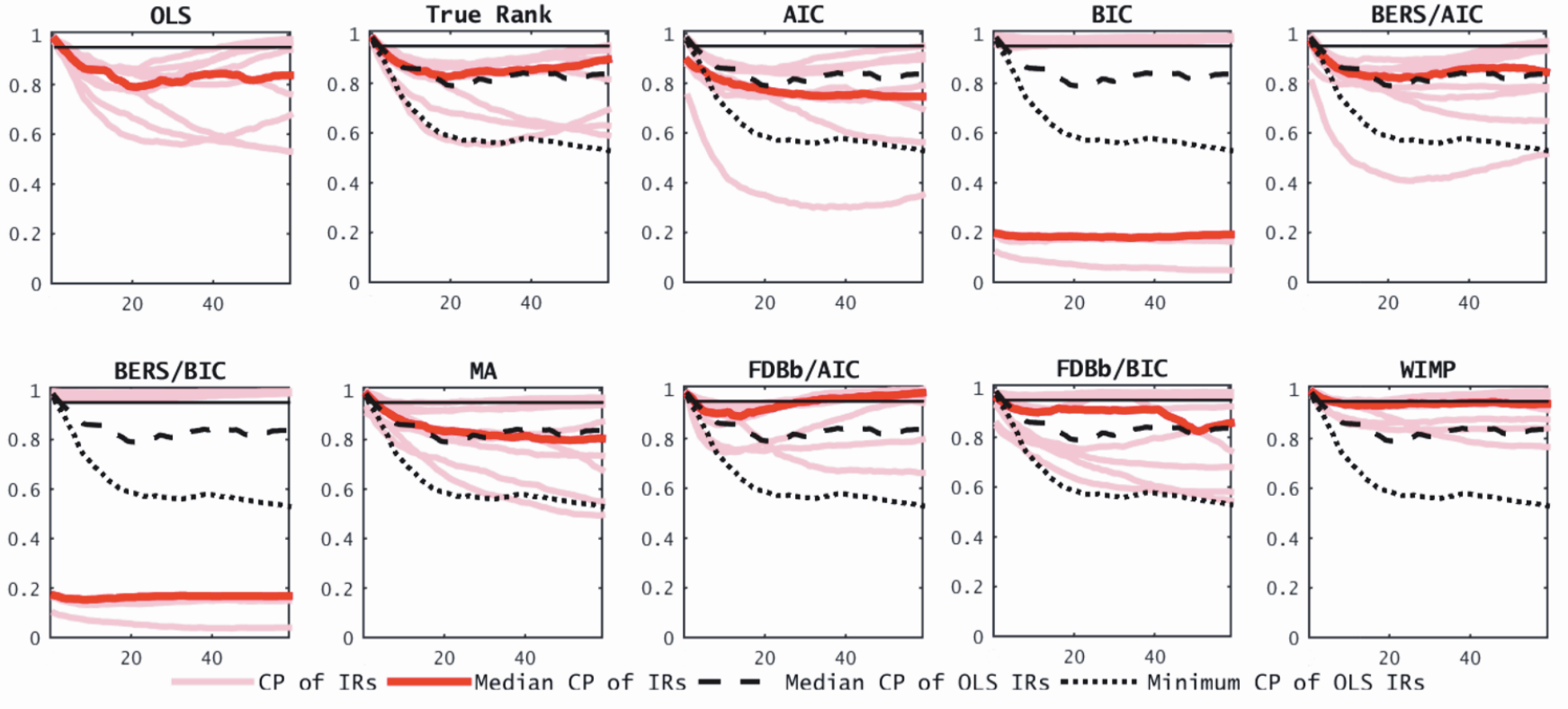}
\caption{DGP1: Empirical coverage rates for $T=100$. 
`\textbf{OLS}': (unrestricted) VAR in levels estimated by OLS; `\textbf{True Rank}': VECM estimated with knowledge of the true rank; `\textbf{AIC}' and`\textbf{BIC}': rank estimation using AIC and BIC, respectively; `\textbf{BERS/AIC}' and `\textbf{BERS/BIC}': Bootstrap Endogenous Rank Selection with respectively AIC and BIC used for rank selection; `\textbf{MA}' : Model Averaging with weights as in \eqref{eq:weights}; `\textbf{FDBb/AIC}' and `\textbf{FDBb/BIC}': FDB bagging with respectively AIC and BIC used for rank selection; `\textbf{WIMP}': WIMP method with weights as in \eqref{eq:weights}.
The {\bf\color{Pink2}{pink}} lines show CPs for all nine impulse responses; the {\bf\color{red}{red}} line is the median of these per horizon. For ease of comparison, the median and minimum coverage of the OLS intervals is always reported in \textbf{black}.}
\label{cp_dgp1_100}
\vspace{1cm}
\includegraphics[width=1\linewidth]{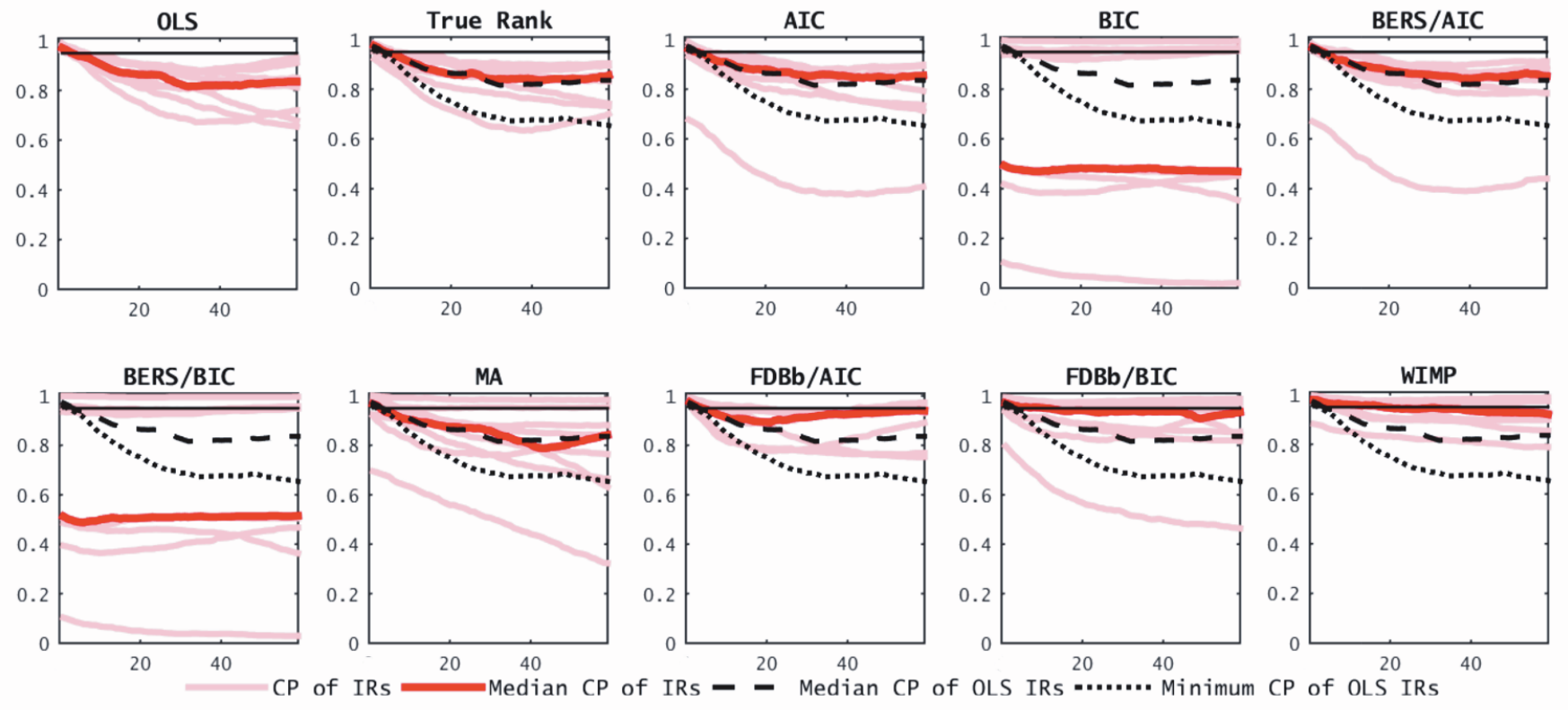}
\caption{DGP1: Empirical coverage rates for $T=200$. See Figure \ref{cp_dgp1_100} for details.}
\label{cp_dgp1_200}
\end{center}
\end{figure}
Figure \ref{cp_dgp1_100} and \ref{cp_dgp1_200} display CPs of the various inferential procedures discussed above for DGP1 for $T=100$ and $T=200$. Based on the two model selection criteria employed, we can partly confirm the findings of \citet{GHP13}. That is, if evidence for a particular rank is weak, pre-testing seems not to deliver more accurate inference than (bootstrap) CIs based on unrestricted OLS. This holds for both sample sizes considered. However, these two frequently used approaches can both not be considered as reliable strategies for the construction of inference -- minimum CPs are well below 60\%. Surprisingly, even when the true model specification is imposed (which could be considered to be the \textit{oracle} method), CPs are generally not closer to the nominal level either; both for short and long horizons. Endogenous rank selection does not seem to improve the performance compared to the pre-testing procedure. FDB bagging does give CPs closer to nominal level, in particular when based on AIC. However, the WIMP intervals outperform all other methods, and deliver CPs that are on average quite close to the $95\%$ nominal level.

Figure \ref{width_dgp1} presents the corresponding average width of the bootstrap intervals over all horizons for the five most relevant methods. There are several interesting observations to make from this figure. First, note that even though FDB bagging and WIMP produce much more accurate intervals than OLS or imposing the true rank, they actually do not produce intervals that are much wider and overly conservative. Second, even though the WIMP method produces more accurate intervals than FDB bagging, intervals are not wider, indicating that the mechanism imposed in the WIMP to reduce the impact of implausible models works well in practice.

\begin{figure}
\begin{center}
\includegraphics[width=1\linewidth]{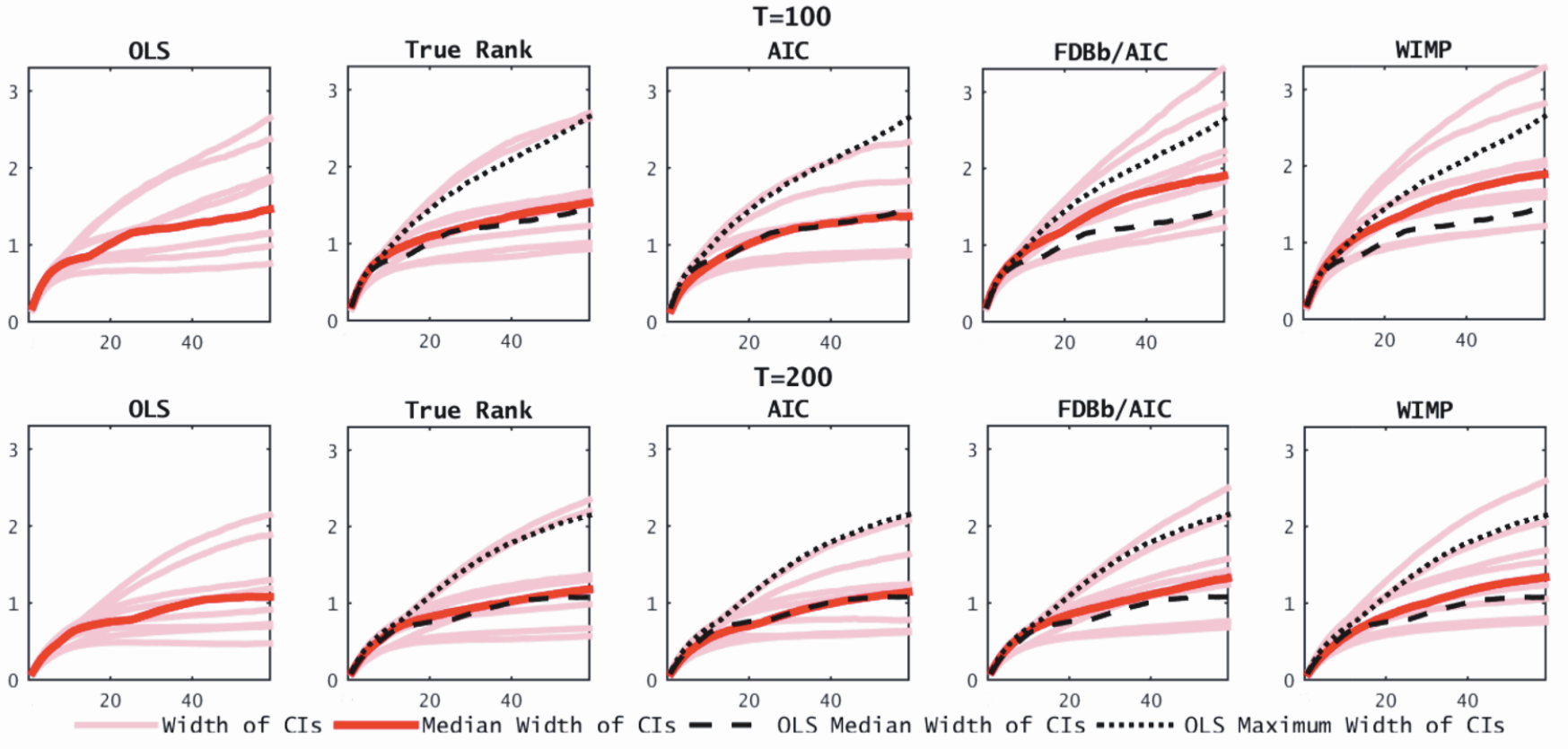}
\caption{DGP1: Average width of 95\% bootstrap CIs for various inference methods for $T=100$ and $T=200$. For details see Figure \ref{cp_dgp1_100}.}
\label{width_dgp1}
\end{center}
\end{figure}

It stands to reason that if evidence for a specific cointegration relation is strong, rank pre-estimation could result in more reliable inference than unrestricted OLS and may outperform the WIMP intervals which -- despite weighting down implausible ranks -- are inherently more conservative. We investigate this further by turning to DGP2. Figure \ref{cp_dgp2} displays CPs for the case of strong cointegration relations. Indeed, CPs implied by model selection based on AIC and BIC are much closer to the nominal level than those entailed by OLS. Bootstrap intervals based on unrestricted estimation can again not be considered as reliable, with minimum CPs around 60\% for both sample sizes. Imposing the true rank delivers CPs close to but still below the nominal level. As in the weak cointegration setting, the WIMP intervals again outperform all other approaches and even deliver CPs closer to nominal level than those implied by the correct rank specification. It is noticeable that the WIMP intervals do not produce overly conservative inference when evidence for a particular rank is strong, but result in CPs very close to the 95\% level. This is also reflected in the average width (over 1000 MC simulations) of the CIs displayed in Figure (\ref{width_dgp2}). WIMP intervals are (if at all) only marginally wider than those implied by the correct rank specification, and are even much narrower than some of the intervals based on the unrestricted model. Finally, note that the WIMP intervals are now also much narrower than some of the FDB bagging intervals while having superior coverage. 

\begin{figure}
\begin{center}
\includegraphics[width=1\linewidth]{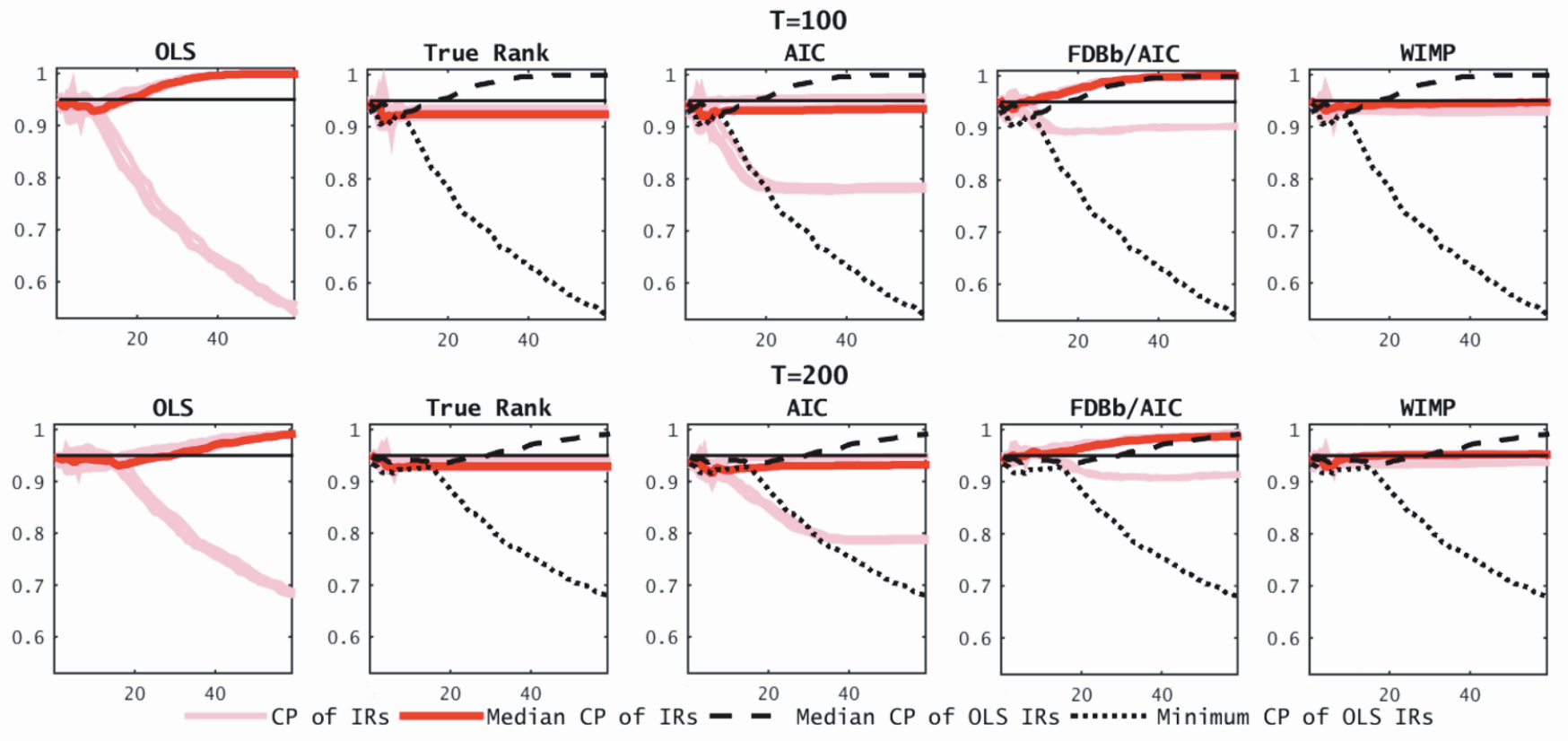}
\caption{DGP2: Empirical coverage rates for various inference methods for $T=100$ and $T=200$. For details see Figure \ref{cp_dgp1_100}.}
\label{cp_dgp2}
\end{center}
\end{figure}
\begin{figure}
\begin{center}
\includegraphics[width=1\linewidth]{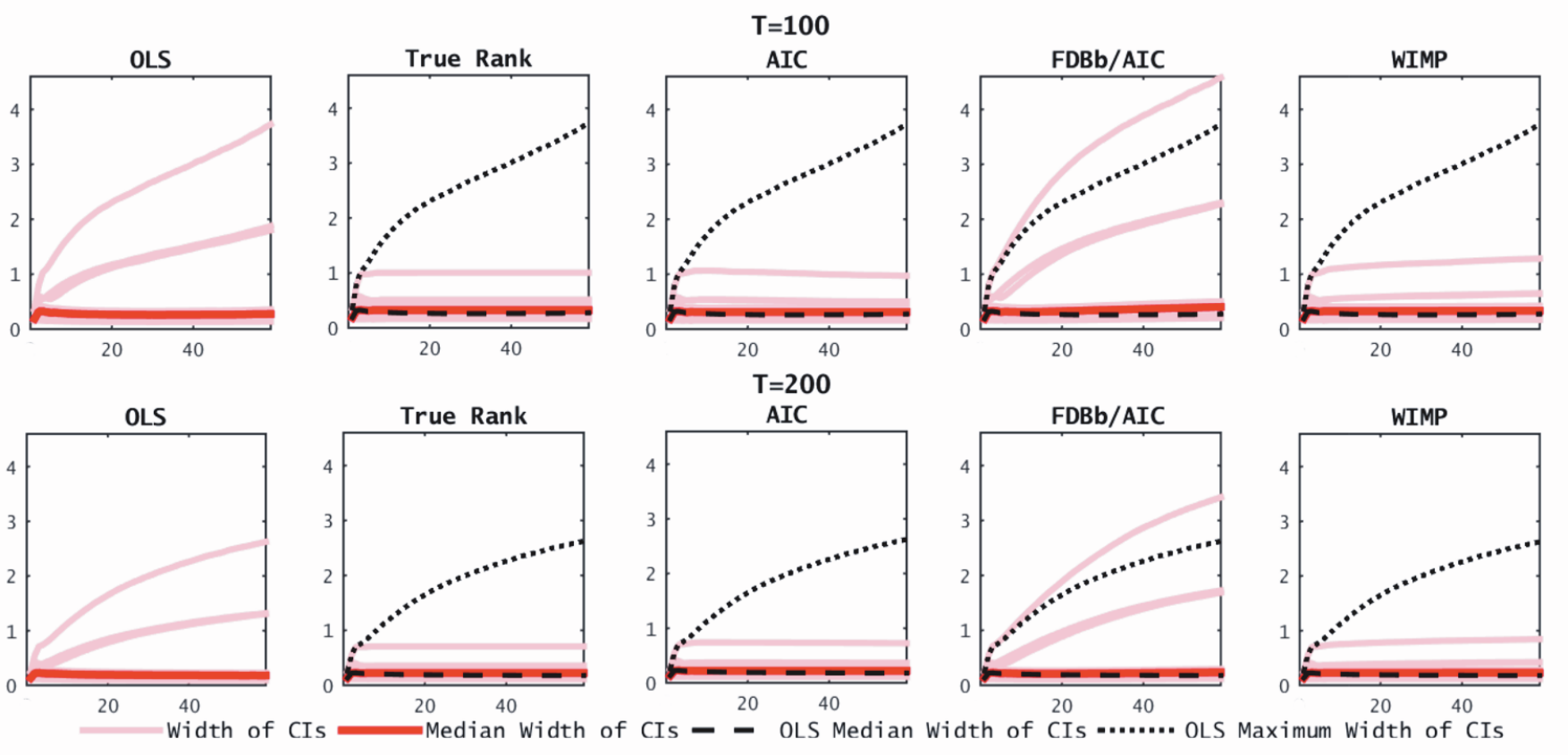}
\caption{DGP2: Average width of 95\% bootstrap CIs for various inference methods for $T=100$ and $T=200$. For details see Figure \ref{cp_dgp1_100}.}
\label{width_dgp2}
\end{center}
\end{figure}

\begin{remark} \label{rem:lag_augm}
We also considered the lag-augmentation approach proposed by \citet{KilianLuetkepohl17} and \citet{InoueKilian19} in the simulations where we implemented a variety of bootstrap versions in combination with lag-augmenting ($p=2$) the VAR; see Appendix \ref{sec:LAVAR} for details. 
Although the performance varied considerably with the specific bootstrap algorithm implemented, even the best performing lag-augmentation method did not seem able to account for rank uncertainty in a satisfactory way, and performed no better than the standard VAR in levels.\footnote{We also investigated the bias correction proposed by \citet{Kilian98b}.  We find that this method provides intervals with coverage close to nominal and comparable to the WIMP. However, some of the intervals are much wider than WIMP and even OLS intervals. The results are summarized in Appendix \ref{sec:LAVAR}.}
\end{remark}

\section{Fiscal Policy Shocks and Rank Uncertainty} \label{sec:fis}

We now study the potential ramifications of rank uncertainty on applied macroeconomic analysis. With our proposed approaches to construct inference accounting for rank uncertainty, we aim to assess the robustness of results obtained from unrestricted VARs. While there are countless VAR-based studies that use impulse response analysis to investigate the propagation of structural economic shocks, we focus in the following on fiscal policy shocks.

As our focus is methodological, we do not complement the literature on identification of structural VARs. Therefore we dispense with a detailed literature review on VAR-based policy analysis and only focus on evaluating seminal papers, reflecting various ways of identification. We also skip a detailed discussion of different identification approaches and their respective merits.\footnote{For a detailed exposition we refer to \citet{Ramey16} for a recent survey on various identification approaches and results in the literature.} Moreover, we omit any discussion on point estimates and focus solely on inference. \footnote{Our aim is not to challenge (widely accepted) empirical findings on the effects of economic policies, but to provide the applied researcher with tools that might help to construct more reliable inference. For that reason, we refrain from a simple replication exercise comparing different inferential approaches, and we want to stress that our goal is certainly not to contrast our findings to the original papers. Instead, we use the same reduced-form VAR and the same dataset across all applications, in order to move away from the original papers and only contrast results based on different identification procedures.}

Fiscal policy can relate to both the expenditure and revenue side of the government's budget. Measuring the effect of active spending policies as well as the consequences of tax changes has been an active field of economic research since decades. One of the first influential contributions using VAR-based impulse responses to assess the effect of government purchases is \citet{BlanchardPerotti02}. The authors identify spending shocks by a recursive identification scheme. With government spending ordered first, this translates into the assumption that government purchases are predetermined within the quarter.

Due to their assumed independence from general macroeconomic conditions, \citet{RameyShapiro98} construct narrative records  based on military buildups to identify truly exogenous spending changes. Those narrative time series have been embedded in several VAR studies and used to identify spending shocks by ordering this series first in a Cholesky-identified VAR. Among the most prominent studies following this approach is \citet{Ramey11}. In her paper she revisits the construction of the government spending news variable, filtering out possible distortions due to anticipation effects.

Narrative series have also been used to identify tax changes. In a series of papers \citet{MertensRavn11, MertensRavn12, MertensRavn13, MertensRavn14} construct various ``dis-aggregates'' of the \citet{RomerRomer09} measures of legislated changes in federal tax liabilities. Specifically, Mertens and Ravn distinguish between announced and unannounced tax changes, or between personal and corporate taxes. Moreover, they do not view those narrative series as a direct measure of ``tax-shocks'' but rather as an external \textit{proxy} which is correlated with the unknown structural shocks.\footnote{See also \citet{StockWatson12a} and \citet{MSW16}.}  Thus, instead of including the narrative variable in the VAR, one can obtain the structural shock of interest by regressing the narrative \textit{proxy} on the reduced-form residuals.

Yet another structural VAR identification approach imposes signs on the impulse responses to a particular shock for a certain horizon. \citet{MountfordUhlig09} identify a contractionary tax-shock as a shock, which leads to non-negative responses in government revenue during the first year after impact. Additionally, this tax-shock is identified by requiring it to be orthogonal to a business cycle shock and a monetary policy shock -- both identified through signs.\footnote{All shocks are identified sequentially by maximizing a penalty function which rewards responses in the desired direction and penalizes the others. Business cycle shocks are identified by assuming co-movements in the same direction as output, consumption, investment and government revenue. Contractionary monetary policy shocks affect responses in reserves and prices negatively and interest rate positively.} In particular, the orthogonality to business cycle fluctuations aims at controlling for movements in the government's budget caused by automatic stabilizers.

We compare uncertainty associated with the estimated impulse responses resulting from the above mentioned four identification approaches using the same data, and the same specification (as far as possible) of the underlying (reduced-form) VAR. That is, we use \citeauthor{BlanchardPerotti02}'s \citeyearpar{BlanchardPerotti02} structural VAR approach as well as \citeauthor{Ramey11}'s \citeyearpar{Ramey11} strategy to incorporate her narrative series in a VAR to identify the effect of government spending. Further, we use \citeauthor{MountfordUhlig09}'s \citeyearpar{MountfordUhlig09} sign-restriction scheme and \citeauthor{MertensRavn14}'s \citeyearpar{MertensRavn14} proxy-VAR to assess the effect of tax-shocks.

The choice of variables and the sample period is largely determined by the ``highest minimal requirement'' across the above identification approaches. The benchmark VAR is estimated in GDP,  private consumption, non-residential investment, government spending, (federal) tax receipts, total non-borrowed reserves, the federal funds rate, real wages, a price index, and the GDP deflator, where all variables except the federal funds rate are transformed to logs. The data is sampled quarterly from 1950/Q1 to 2006/Q4. A detailed description of the data is given in Appendix \ref{sec:data}. Additionally we use \citeauthor{Ramey11}'s \citeyearpar{Ramey11} news variable and \citeauthor{MertensRavn14}'s unanticipated tax-change proxy. The VAR representation in levels includes an intercept and a deterministic linear time trend, and four lags are included. We construct inference using the residual-based bootstrap algorithm presented in Algorithm \ref{alg:b}.\footnote{We did not find strong evidence of heteroskedasticity in the reduced-form residuals and refrain from using a robust bootstrap procedure such as the moving block bootstrap \citep{BJT16}. All approaches outlined in this paper could be easily extended in this way.}$^{,}$\footnote{While \citeauthor{Ramey11}'s \citeyearpar{Ramey11} news series is included in the VAR, and thus, bootstrapped ``endogenously'', we jointly draw (with replacement) from the reduced-form residuals and \citeauthor{MertensRavn14}'s external variable to account for uncertainty in estimating the effects of tax-shocks using this proxy.}

In order to make results somewhat comparable, impulse responses are normalized such that the point estimate of the response of the policy instruments has a peak at unity across different identification approaches \citep[see for example][]{Ramey11}. As a measure of uncertainty we plot 68\% confidence intervals, which is standard in the fiscal policy literature.\footnote{The data set as well as a MATLAB toolbox for the WIMP method with the identification schemes used used in this section are available at \href{http://www.stephansmeekes.nl}{http://www.stephansmeekes.nl}.}

\begin{sidewaysfigure}
\begin{minipage}{0.45\textwidth}
\begin{center}
\includegraphics[width=\linewidth]{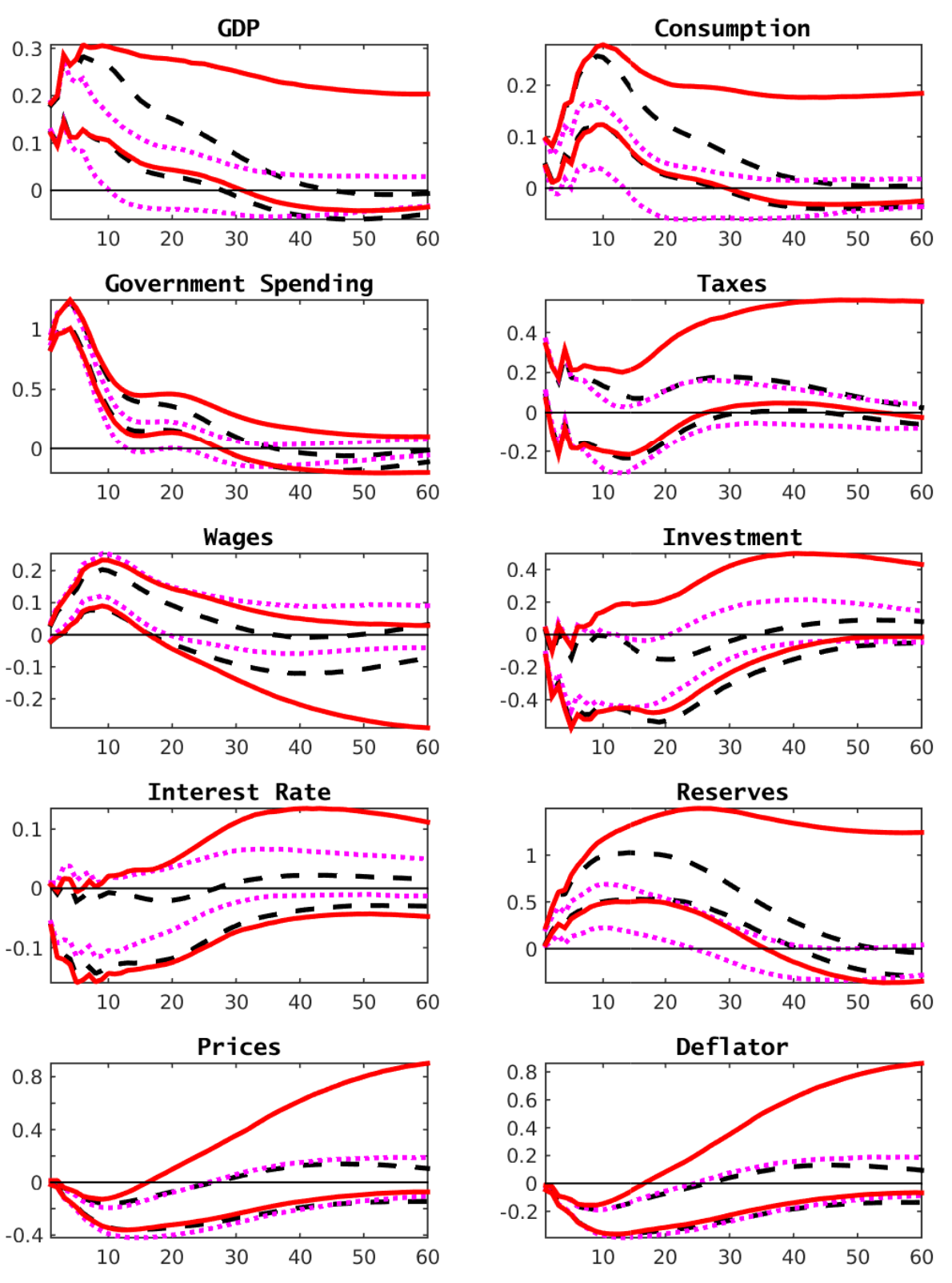}
\caption{68\% confidence intervals of impulse responses to a government spending shock identified as in \citet{BlanchardPerotti02}.
\textbf{Dashed} lines are OLS intervals, {\bf\color{HotPink2}{dotted}} lines FDBb/AIC intervals, {\bf\color{red}{solid}} lines  WIMP intervals.}
\label{Blanchard}
\end{center}
\end{minipage}
\hfill
\begin{minipage}{0.45\textwidth}
\begin{center}
\includegraphics[width=\linewidth]{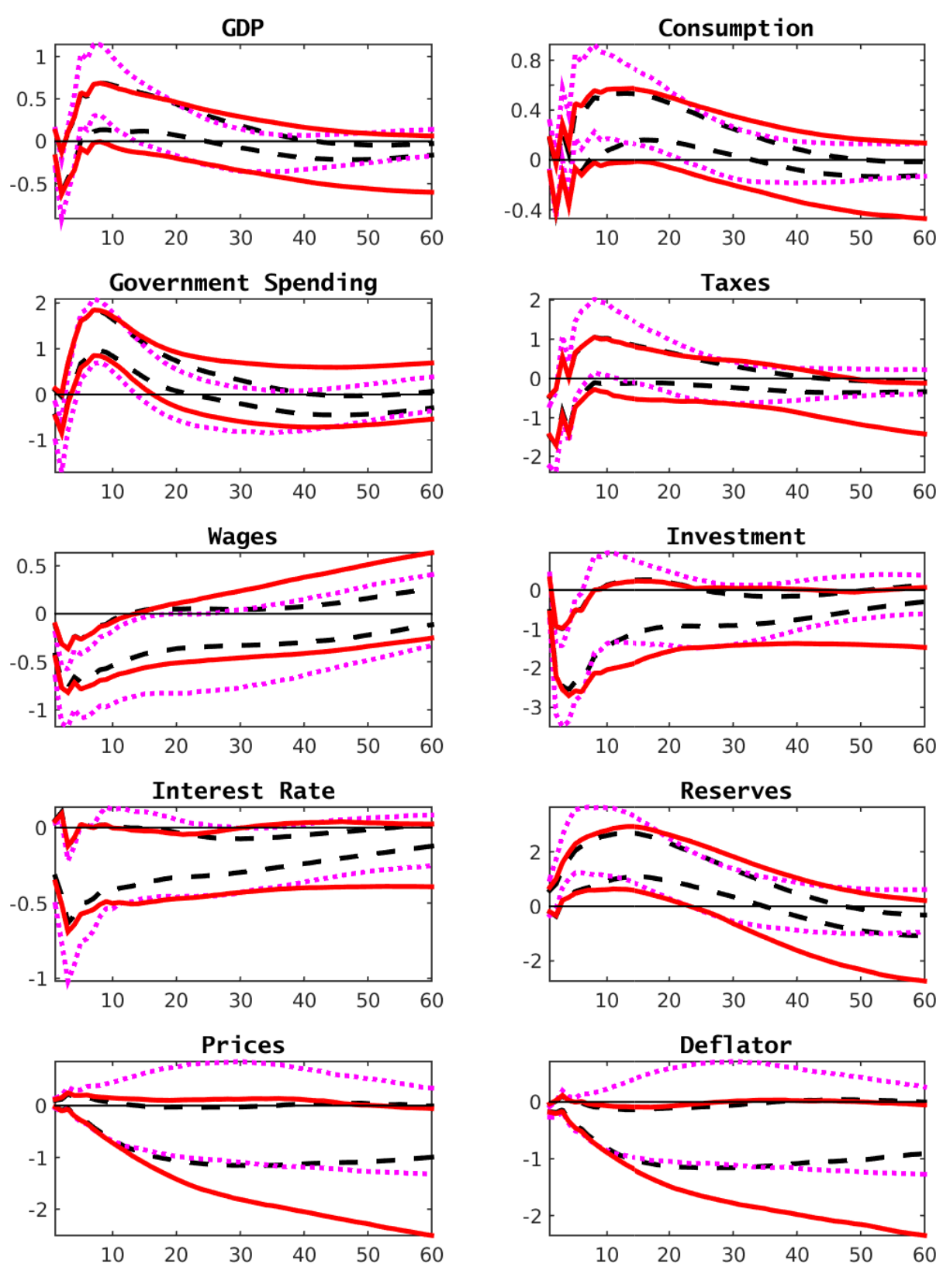}
\caption{68\% confidence intervals of impulse responses to a government spending shock identified as in \citet{Ramey11}.
For details see Figure \ref{Blanchard}.}
\label{Ramey}
\end{center}
\end{minipage}
\end{sidewaysfigure}

Figure \ref{Blanchard} and Figure \ref{Ramey} display unrestricted VAR in levels (estimated by OLS), FDB bagging (with AIC selection), and WIMP confidence bands (using the same specifications as in Section \ref{sec:sim}) of impulse responses due to a government spending shock. For the recursive VAR as in \citet{BlanchardPerotti02}, all three measures of uncertainty suggest that government spending shocks generate an initial boost in GDP. While the FDBb intervals indicate a rather moderate increase relative to the OLS intervals, the WIMP intervals imply maximum multiplier effects greater in range (roughly between 0.7 and 1.5). Considering impulse responses following Ramey's news shocks, it seems to be less clear whether government spending stimulates output or not. While the OLS confidence bands (and to a lesser extend the FDBb bands) support findings in the literature suggesting a short-lived boost in GDP, the WIMP intervals indicate greater uncertainty associated with the output response. Indeed, ``robust'' spending peak multipliers range between 0 and 3.3, such that a reliable conclusion on the effectiveness of spending policies cannot be made in this case.

\begin{sidewaysfigure}
\begin{minipage}{0.45\textwidth}
\begin{center}
\includegraphics[width=\linewidth]{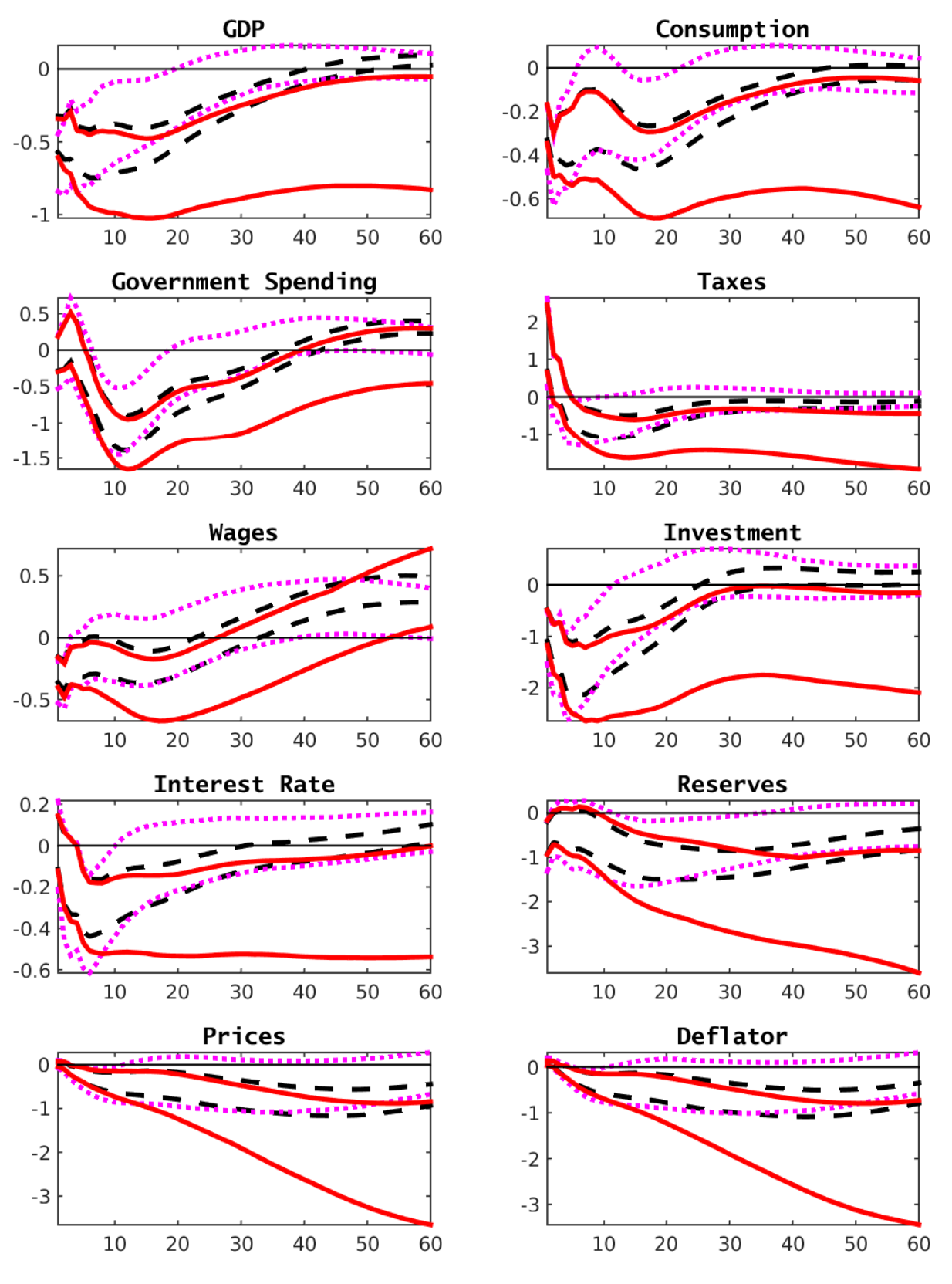}
\caption{68\% confidence intervals of impulse responses to a tax-shock identified as in \citet{MountfordUhlig09}.
For details see Figure \ref{Blanchard}.}
\label{Uhlig}
\end{center}
\end{minipage}
\hfill
\begin{minipage}{0.45\textwidth}
\begin{center}
\includegraphics[width=\linewidth]{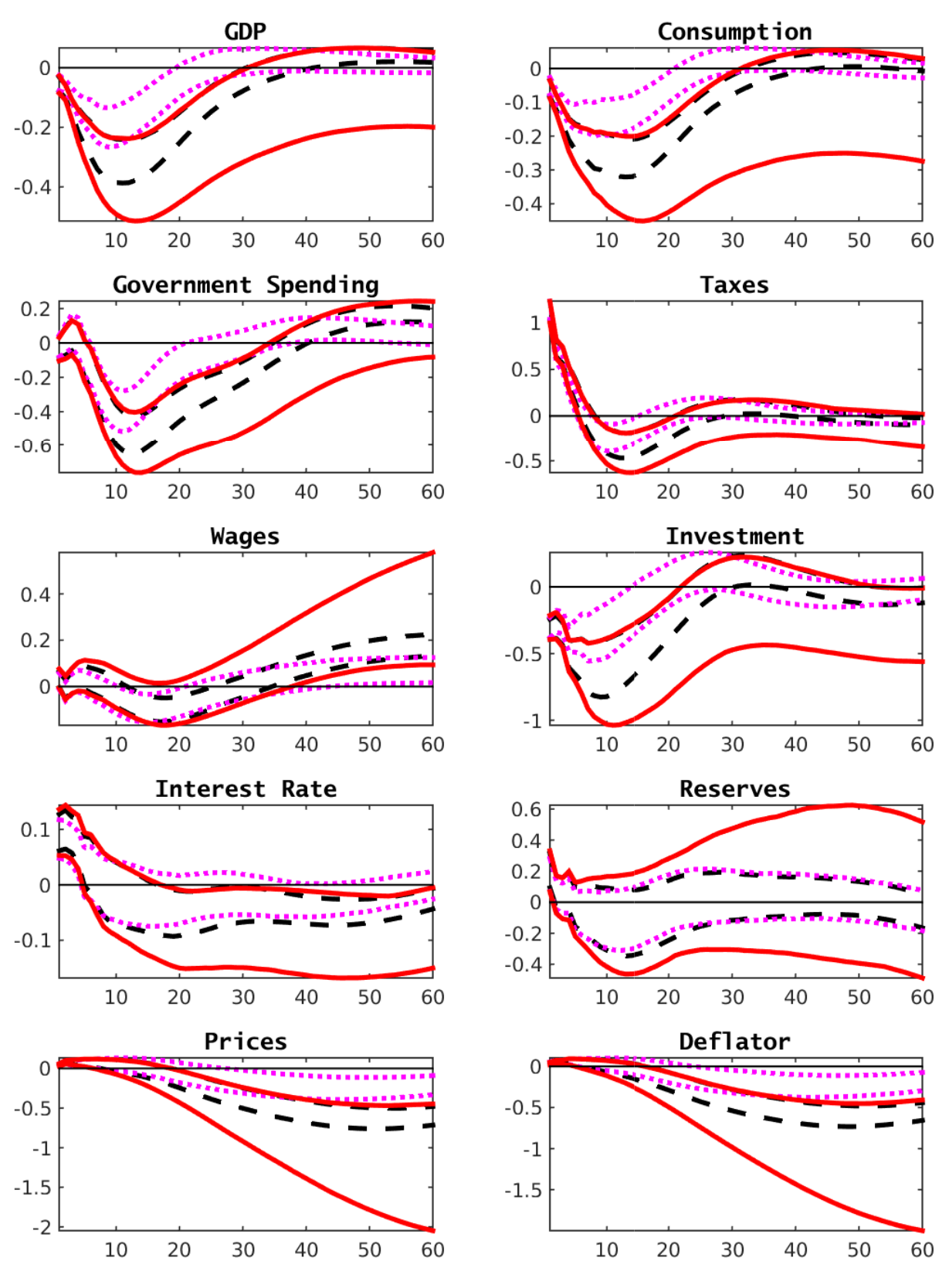}
\caption{68\% confidence intervals of impulse responses to a tax-shock identified as in \citet{MertensRavn12,MertensRavn14}.
For details see Figure \ref{Blanchard}.}
\label{Mertens}
\end{center}
\end{minipage}
\end{sidewaysfigure}

Confidence intervals of impulse responses following a contractionary tax-shock are displayed in Figures \ref{Uhlig} and \ref{Mertens}. Qualitatively, responses of GDP and its main aggregates are rather similar across both identification approaches and across all three inferential procedures: Output, consumption, and investment decrease significantly. The long-lived contraction in economic activity is accompanied by an equally lengthy decline in government spending, which hinders interpretation of the shocks as ``pure'' tax-shocks. Quantitatively, the implied response of output is much greater in the proxy VAR framework compared to the SVAR one. Intervals for peak multipliers include -6 for the former, and -3 for the latter.

Similar to the responses due to a government spending shock, the FDBb intervals are not necessarily wider than the OLS intervals. However, when considering the impact on output, and in contrast to scenario investigated above, the two intervals do not intersect at times and the FDBb intervals imply a significantly smaller impact on economic activity. This holds for both the shocks of \citet{MountfordUhlig09} and \citet{MertensRavn12,MertensRavn14}. Reflecting potentially more conservative inference, the WIMP intervals are wider, often encompassing the OLS intervals. Yet the WIMP intervals indicate that OLS-based inference rather underrates the effect of the identified tax-shocks on almost all variables. Generally, tax-shocks estimated by the proxy VAR imply greater effects on economic activity than those identified through sign restrictions. Moreover, the comparison with the spending shocks, supports some results in the literature suggesting that tax-cuts may be more effective in stimulating the economy.

\begin{figure}[!ht]
\begin{center}
\includegraphics[width=0.5\linewidth]{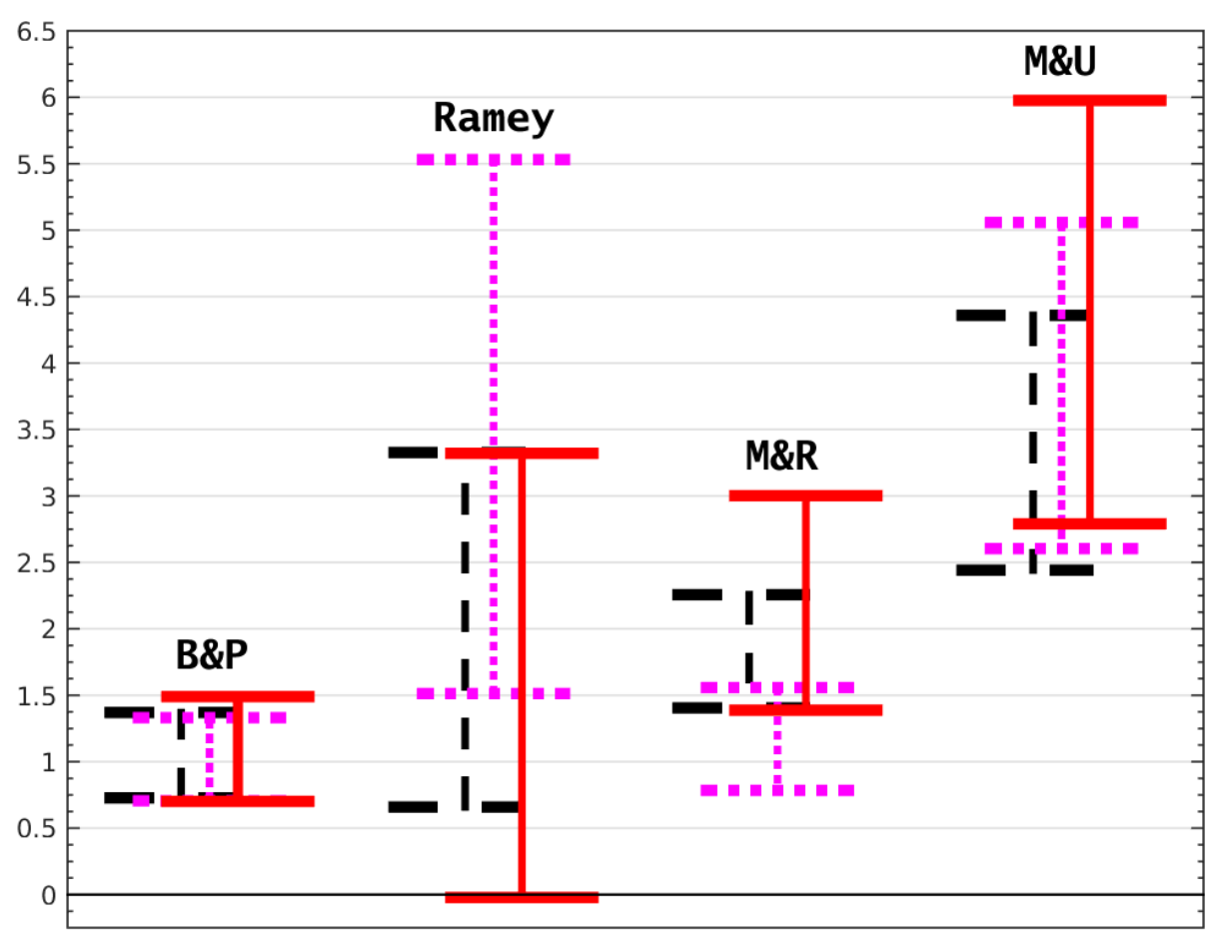}
\caption{68\% confidence intervals of peak multipliers implied by government spending and tax-cut shocks based on \citet{BlanchardPerotti02} \textbf{[B\&P]}, \citet{Ramey11} \textbf{[Ramey]}, \citet{MountfordUhlig09} \textbf{[M\&U]} and \citet{MertensRavn12,MertensRavn14} \textbf{[M\&R]}.
\textbf{Dashed} lines are OLS intervals, {\bf\color{HotPink2}{dotted}} lines FDBb/AIC intervals, {\bf\color{red}{solid}} lines WIMP intervals.}
\label{Multiplier}
\end{center}
\end{figure}

Figure \ref{Multiplier} compares confidence intervals for peak multipliers. Indeed, evidence suggesting that multipliers exceeding unity is much stronger for tax-cut policies than for spending policies. Based on the results for Ramey's news shock, multipliers due to expansionary spending policies might even not be significant at all.

The above results illustrate that ignoring uncertainty about the co-integration relations may lead to ambiguous quantification of statistical significance. Incorporating this uncertainty via the WIMP approach allows for a more confident interpretation of the results.

\section{Discussion} \label{sec:conc}
In this paper we have shown empirically and through a simulation study that ignoring uncertainty about cointegration relations may lead to unreliable inference for (structural) impulse responses.  Since the commonly used specification of the VAR in levels ignores any evidence for cointegration in the data, associated inference captures uncertainty only poorly. Also, model selection techniques, such as rank pre-estimation by sequential testing or information criteria, seem to deliver reliable inference only if evidence for the true cointegration rank is strong. In this paper we propose a novel data-driven approach to robust inference for impulse responses in the presence of uncertainty regarding the cointegration rank. Our WIMP approach is shown both by simulation and empirically to still be able to deliver meaningful (i.e.~not too wide) confidence intervals while being robust to rank uncertainty. As such it provides a reliable and simple alternative to the unreliable standard approaches.

Practical implementation of the WIMP approach only requires fixed-rank (bootstrap) intervals plus the sequence of trace tests for all rank tests, which are both readily available in any standard statistical software. While a toolbox for the WIMP methods used in our application is directly available, our approach can also easily be implemented for any desired SVAR analysis, as the fixed-rank intervals used as input for the WIMP can be based on any appropriate method, both in terms of inference method such as the bootstrap and identification scheme. Finally, the computational cost of the method is fairly low; on any modern computer bootstrap intervals for a fixed rank are fast to compute, and given that in this kind of VAR model the number of variables (and hence the number of ranks) has to be relatively low to avoid the curse of dimensionality, doing so for all ranks should pose no problem.

While prudent construction of inference is particularly important for impulse responses, our proposed WIMP procedure is equally beneficial in different VAR contexts, such as forecasting. While forecast combinations across different models are well accepted as point forecasts, our WIMP method allows to construct corresponding interval forecasts that account for model uncertainty. More generally, the approach can be adapted to a variety of model selection problems, as long as the relative evidence for a particular model can be assessed against a modest number of alternatives. While in theory it can be applied to high-dimensional problems as well, computationally the method is best suited for low-dimensional problems where the number of models is relatively small. While this is a limitation of the method, it is inherent to the simultaneous inference philosophy behind, which also holds for the PoSI method of \citet{PoSI13}. Exploring the usefulness and limitations of the WIMP in more general settings is therefore an interesting avenue for future research.

\singlespacing

\newpage
\onehalfspacing
\begin{appendices}

\numberwithin{algorithm}{section}
\numberwithin{figure}{section}
\numberwithin{remark}{section}
\section{Algorithms} \label{sec:alg}
Here we describe the bootstrap algorithms used in the paper. Algorithm \ref{alg:b} is the specific fixed-rank bootstrap algorithm used in the simulation and empirical sections.

\begin{algorithm} [Bootstrap Confidence Interval under Rank $r$] \label{alg:b}
\leavevmode \vspace{-\baselineskip}	
\begin{enumerate}
\item Let $\tilde{y}_t = y_t - \hat{\mu}_0 - \hat{\mu}_1 t$ for $t=1,\ldots,T$ and estimate the VECM under rank $r$ and obtain the residuals
\begin{equation*}
\hat{u}_t = \Delta \tilde{y}_t - \hat{\Pi}^{(r)} \tilde{y}_{t-1} - \sum_{j=1}^{p-1} \hat{\Gamma}_j^{(r)} \Delta \tilde{y}_{t-j}, \qquad t=p+2, \ldots,T.
\end{equation*}

\item Use a bootstrap method to obtain bootstrap errors $\left\{u_t^{*}\right\}_{t=p+2}^{T}$ from the residuals $\left\{\hat{u}_t\right\}_{t=p+2}^T$.

\item Build the bootstrap sample $\left\{y_t^{*}\right\}_{t=1}^T$ recursively as
\begin{equation*}
y_t^{*} = y_{t-1}^{*} + \hat{\Pi}^{(r)} y_{t-1}^{*} + \sum_{j=1}^{p-1} \hat{\Gamma}_j^{(r)} \Delta y_{t-j}^{*} + u_t^{*}, \qquad t = p+2,\ldots,T,
\end{equation*}
using initial values $y_1^*, \ldots, y_{p+1}^*$.

\item Detrend the bootstrap sample to obtain $\tilde{y}_t^* = y_t^* - \hat{\mu}_0^* - \hat{\mu}_1^* t$ for $t=1,\ldots,T$. Estimate the VECM under rank $r$ on $\{\tilde{y}_t^{*}\}_{t=1}^T$ to obtain $\hat{\theta}^{(r)*}$. Obtain the bootstrap impulse response as $\hat{\zeta}^{(r)*} = \bar{f}(\hat{\theta}^{(r)*})$.

\item Repeat Steps 2 to 4 $B$ times. Let $q^* (\gamma)$ denote the $\gamma$-quantile of the $B$ centered bootstrap statistics $\hat{\zeta}^{(r)*} - \hat{\zeta}^{(r)}$. Construct a $(1-\gamma)$-confidence interval for $\zeta$ as $\left[L^{(r)} (\gamma), U^{(r)} (\gamma) \right]$, where $L^{(r)} (\gamma) = \hat{\zeta}^{(r)} - q^{*} (1-\gamma/2)$ and $U^{(r)} (\gamma) = \hat{\zeta}^{(r)} - q^{*} (\gamma/2)$.
\end{enumerate}
\end{algorithm}

\begin{remark}
Depending on the specific assumptions made on $\{u_t\}$, a variety of different bootstrap methods, such as i.i.d., wild or block bootstrap, can be used in Step 2 of Algorithm \ref{alg:b}; we provide further details in Section \ref{sec:wimp_av}. Similarly, different initializations in Step 3 can be used. For the simulation study and application in this paper, we use the i.i.d.~bootstrap in Step 2 and initialize the bootstrap sample in step 3 by setting $y_t^* = y_t$ for $t=1,\ldots,p+1$.
\end{remark}

\begin{remark} \label{rem:detr}
Instead of detrending or demeaning (with $\hat{\mu}_1=0$) prior to estimation, one could also directly incorporate deterministic components in the VECM \citep[cf.][]{Johansen95}. However, one then has to decide how the deterministic components affect the long run and short run components separately, resulting in a multitude of different specifications. Our simpler, robust, strategy corresponds to the typical approach taken in most empirical studies, and makes the estimators of the detrended VECM invariant to the true deterministics present in the DGP.

In Step 4 of the algorithm we detrend the bootstrap data again, re-estimating the deterministic components, which might appear unnecessary as the bootstrap data do not contain any trends. However, this is done to mimic the effect of detrending on the calculated impulse responses, which under cointegration and at very long horizons, will affect the asymptotic distributions as it would unit root or cointegration analyses. It might be tempting to also first ``retrend'' the bootstrap data, that is, to put the estimated trend back into the bootstrap sample. This is however unnecessary as the consequent detrending makes the estimators invariant to the exact value of the trend coefficient, see for example Remark 2 in \citet{Smeekes13}.
\end{remark}

\noindent Algorithm \ref{alg:b_ers} shows how endogenous rank selection can be implemented in the bootstrap.

\begin{algorithm} [Bootstrap Endogenous Rank Selection (BERS)] \label{alg:b_ers}
Choose a rank selection method $M_r(\cdot)$, and let $\hat{r} = M_r (Y_T)$. Perform Steps 1-3 of Algorithm \ref{alg:b} with $r = \hat{r}$ or $r = K$. Next, replace Step 4 by
\begin{enumerate}
\setcounter{enumi}{3}
\item Let $\hat{r}^* = M_r (Y_T^*)$, where $Y_T^* = (y_1^*, \ldots, y_T^*)^\prime$. Estimate the VECM with rank $\hat{r}^*$ on the bootstrap sample $(y_t^{*})_{t=1}^T$ (after detrending) to obtain $\hat{\theta}^{(\hat{r}^*)*}$. Obtain the bootstrap impulse response as $\hat{\zeta}^{(\hat{r}^*)*} = \bar{f} (\hat{\theta}_j^{(\hat{r}^*)*})$.
\end{enumerate}
Perform Step 5 as in Algorithm \ref{alg:b}.
\end{algorithm}

\noindent Algorithm \ref{alg:b_fdb} details how to implement bagging with the Fast Double Bootstrap.
\begin{algorithm} [FDB bagging (FDBb)] \label{alg:b_fdb}
Choose a rank selection method $M_r (\cdot)$, and perform steps 1-4 of Algorithm \ref{alg:b_ers}. Next:
\begin{enumerate}
\setcounter{enumi}{4}
\item Perform a second bootstrap procedure on the bootstrap sample $\{y_t^{*}\}_{t=1}^T$ to obtain double-bootstrap impulse responses. For every bootstrap sample $\{y_t^{*}\}_{t=1}^T$, only \textit{one} second-level bootstrap sample has to be drawn. Specifically, take the following steps:

\begin{enumerate}[{\bf (i)}]
\item Estimate the VECM with rank $\hat{r}^* = M_r(Y_T^*)$, where $Y_T^* = (y_1^*, \ldots, y_T^*)^\prime$, and obtain the residuals
\begin{equation*}
\hat{u}_t^{*} = \Delta \tilde{y}_t^{*} - \hat{\Pi}^{(\hat{r}^*)*} \tilde{y}_{t-1}^* - \sum_{j=1}^p \hat{\Gamma}_j^{(\hat{r}^*)*} \Delta \tilde{y}_{t-j}^{*}, \qquad t= p+2, \ldots, T.
\end{equation*}

\item Construct the second-level bootstrap errors $\left\{u_t^{**}\right\}_{t=p+2}^{T}$ from $\left\{\hat{u}_t^{*}\right\}_{t=p+2}^T$ using the same bootstrap method as for the first level, and build the second-level bootstrap sample $\left\{y_t^{**}\right\}_{t=1}^T$ recursively as
\begin{equation*}
y_t^{**} = y_{t-1}^{**} + \hat{\Pi}^{(\hat{r}^*)*} y_{t-1}^{**} + \sum_{j=1}^p \hat{\Gamma}_j^{(\hat{r}^*)*} \Delta y_{t-j}^{**} + u_t^{**}, \qquad t = p+2, \ldots,T.
\end{equation*}

\item Estimate the cointegration rank $\hat{r}^{**} = M_r(Y_T^{**})$, where $Y_T^{**} = (y_1^{**}, \ldots, y_T^{**})^\prime$. Estimate a VECM with rank $\hat{r}^{**}$ on $Y_T^{**}$ to obtain the double-bootstrap impulse responses $\hat{\zeta}^{(\hat{r}^{**})**}$.
\end{enumerate}

\item Repeat Steps 1 to 5 $B$ times. Let $\hat{\zeta}_{1}^{(\hat{r}^*)*}, \ldots, \hat{\zeta}_{B}^{(\hat{r}^*)*}$ denote the ordered sequence of the first-level bootstrap estimates obtained over the $B$ bootstrap replications. The \textit{bagging} estimator of the impulse response is then defined as $\hat{\zeta}^{\text{bag}} = B^{-1} \sum_{b=1}^{B} \hat{\zeta}_{b}^{(\hat{r}^*)*}$. Let $q^{**} (\gamma)$ denote the $\gamma$-quantile of the $B$ centered second-level bootstrap statistics $\hat{\zeta}^{(\hat{r}^{**})**} - \hat{\zeta}^{(\hat{r}^{*})*}$. Construct a $(1-\gamma)$-confidence interval for $\zeta$ as $\left[\hat{\zeta}^{\text{bag}} - q^{**} (1-\gamma/2), \hat{\zeta}^{\text{bag}} - q^{**} (\gamma/2) \right]$.
\end{enumerate}
\end{algorithm}

\section{Proofs}

\begin{proof}[Proof of Theorem \ref{th:av_pw}]
By Assumption (i), we have that $\Prob (R = r_0) \rightarrow 1$. As by construction $L^{\WIMP} (\gamma) \leq L^{(R)}$, it follows that
\begin{equation*}
\begin{split}
\Prob \left(L^{\WIMP} (\gamma) \leq L^{(r_0)} (\gamma) \right) &= \Prob \left( \left.L^{\WIMP} (\gamma) \leq L^{(R) }(\gamma) \right| R = r_0 \right) \Prob (R = r_0) + o(1) \\
&= P(R = r_0) + o(1) \rightarrow 1,\\
\end{split}
\end{equation*}
and similarly $\Prob \left(U^{\WIMP} (\gamma) \geq U^{(r_0)} (\gamma)\right) \rightarrow 1$. The result then follows from assumption (ii) as
\begin{equation*}
\Prob \left( L^{\WIMP} (\gamma) \leq \zeta \leq U^{\WIMP} (\gamma) \right) \geq \Prob \left(L^{(r_0)} (\gamma) \leq \zeta \leq U^{(r_0)}(\gamma) \right) +o(1) \rightarrow 1 - \gamma.\qedhere
\end{equation*}
\end{proof}

\begin{proof}[Proof of Proposition \ref{lem:weights}]
It follows from \citet{Johansen95} and \citet{BernsteinNielsen14} that for all $r \geq r_0$,  $J_T (r) = O_p(1)$, such that $T^{-c_2} J_T(r) \xrightarrow{p} 0$, while for $r < r_0$, we have that $J_T (r)/T$ is tight, such that $T^{-c_2} J_i(r) = T^{1-c_2} J_T(r) / T \xrightarrow{p} \infty$. Therefore we have that $e^{-c_1 T^{-c_2} J_T(r)} \xrightarrow{p} \mathbbm{1}(r \geq r_0)$ and consequently $W_T(r) \xrightarrow{p} \mathbbm{1}(r = r_0)$.
\end{proof}

\section{Additional Simulations}\label{sec:LAVAR}
In this section we investigate by simulation the properties of two alternative bootstrap approaches, the lag-augmentation proposed by \citet{KilianLuetkepohl17} and \citet{InoueKilian19} as well as the bias correction of \citet{Kilian98}.

In order to combine the idea of lag-augmentation with a bootstrap algorithm several choices have to be made. In particular, the VAR process from which the bootstrap samples are built (see e.g.~Step 1/Step 3 of Algorithm \ref{alg:b}) has to be specified. Potential candidates are the ``correctly specified'' VAR and the lag-augmented one. Similarly, one has to decide how to re-estimate the VAR parameters from each bootstrap sample, i.e. using lag-augmentation or not. \citet{KilianLuetkepohl17} and \citet{InoueKilian19} provide little practical guidance on these decisions. We investigated various possible ways to generate bootstrap inference with lag-augmented VARs and found that the performance heavily varied with these choices. We here report the performance of the best performing method, where in Step 1 in Algorithm \ref{alg:b}, we estimate a (correctly lag-specified) VAR(1) in levels to construct the bootstrap DGP in Step 3, whereas $\hat{\zeta}$ and $\hat{\zeta}^*$ in Step 4 and 5 are based on (the first lag of) a lag-augmented VAR in levels ($p=2$) estimated on the data $Y_T$ and the simulated data $Y_T^*$, respectively.

Figure \ref{cp_dgp1_la} and Figure \ref{width_dgp1_la} summarize the simulation results for the lag-augmented approach, including OLS and WIMP as well for ease of comparison. The empirical coverage probabilities of the lag-augmented VAR in Figure \ref{cp_dgp1_la} are reasonably good, especially for longer horizons. For short horizons, the intervals have more serious undercoverage than OLS and WIMP. However, the widths of the intervals in Figure \ref{width_dgp1_la} show that the lag-augmented VAR intervals are much wider than OLS and WIMP intervals, and clearly far too wide to be of any practical use. This is because the lag-augmented VAR tends to imply overly persistent, often explosive dynamics -- at least in our simulation design.

\begin{figure}[!ht]
\begin{center}
\includegraphics[width=0.6\linewidth]{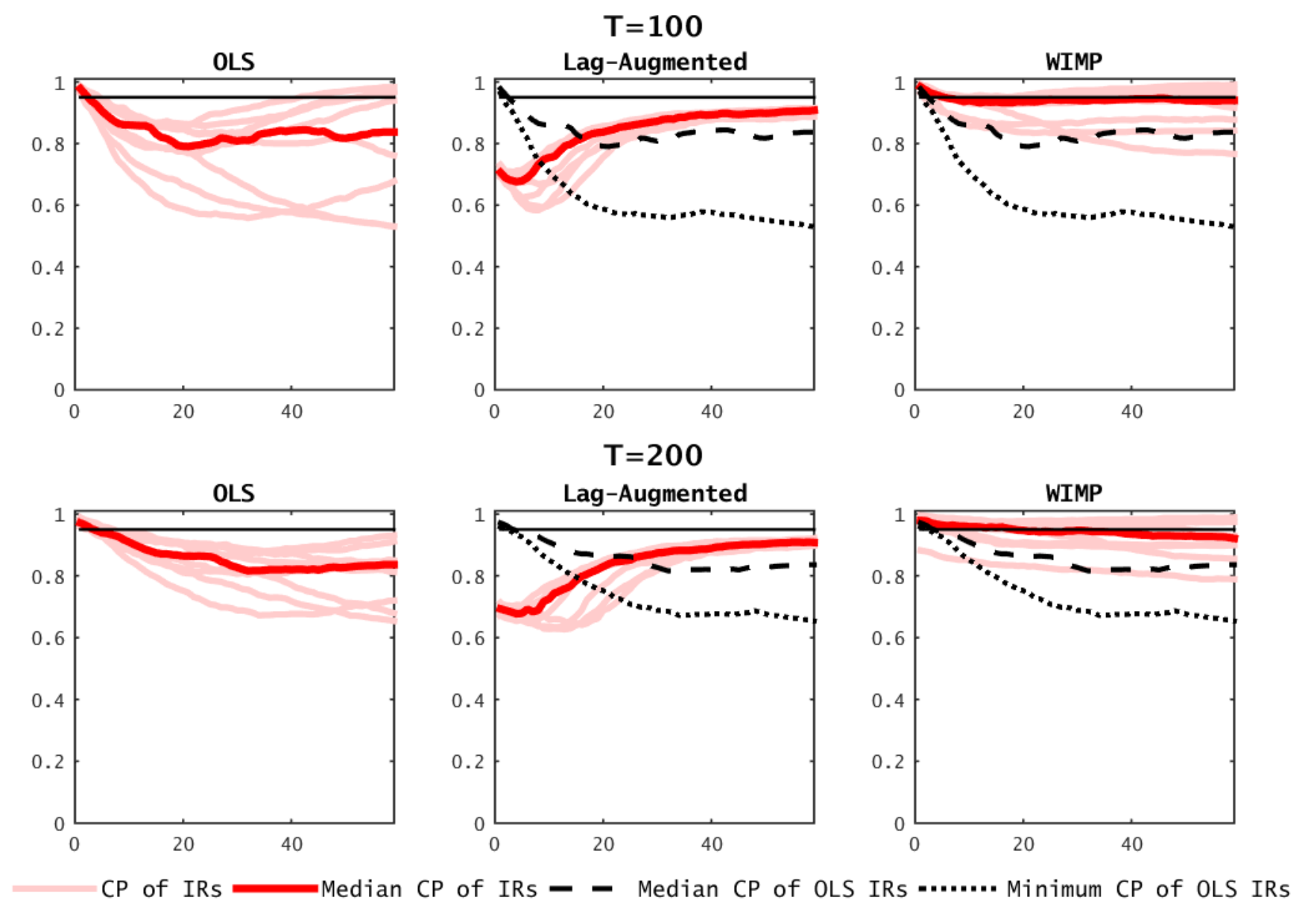}
\caption{DGP1: Empirical coverage rates for the lag-augmented approach for $T=100$ and $T=200$. For details see Figure \ref{cp_dgp1_100}.}
\label{cp_dgp1_la}
\end{center}
\begin{center}
\includegraphics[width=0.7\linewidth]{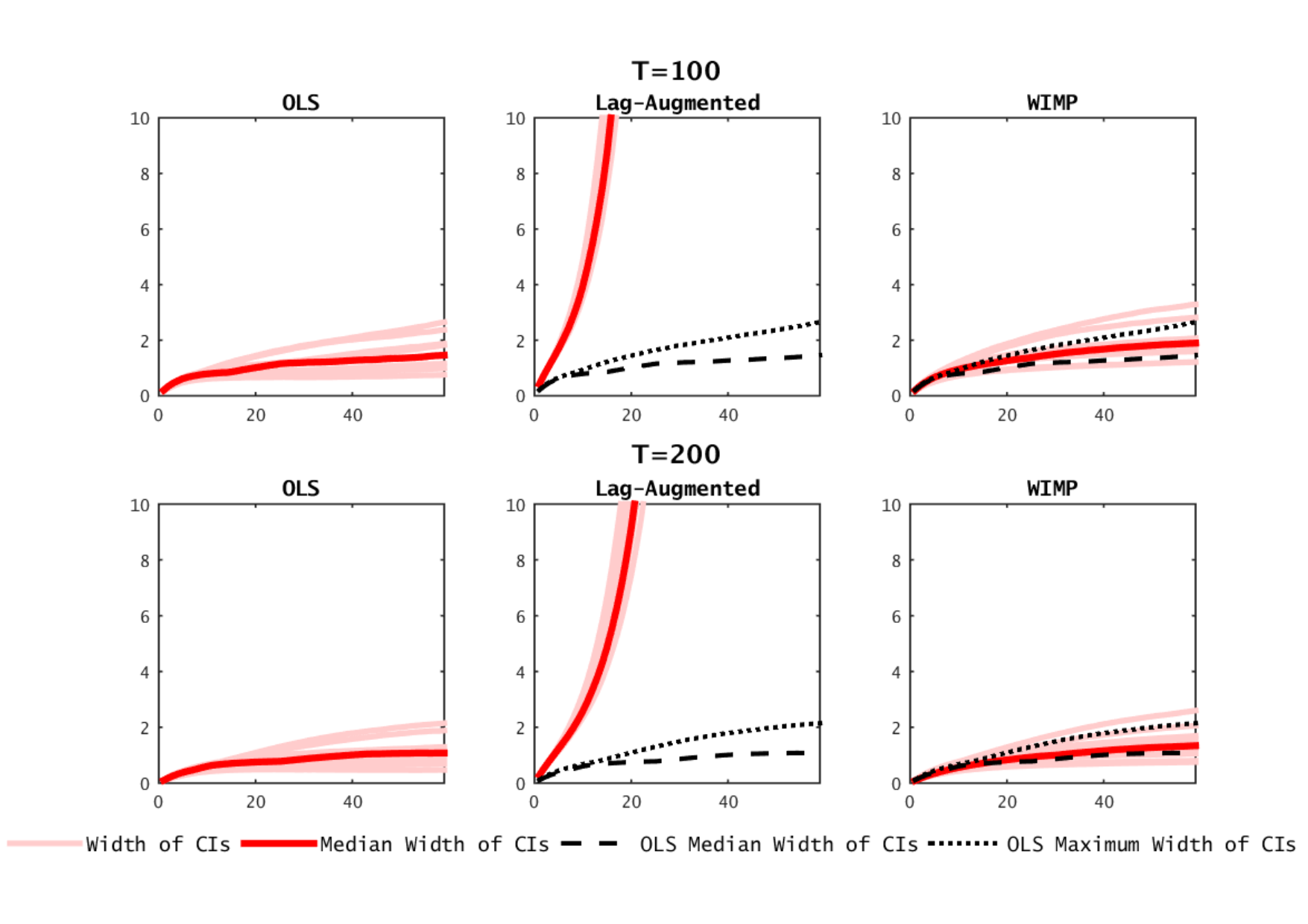}
\caption{DGP1: Average width of 95\% bootstrap CIs for the lag-augmented approach for $T=100$ and $T=200$. For details see Figure \ref{cp_dgp1_100}.}
\label{width_dgp1_la}
\end{center}
\end{figure}

\citet{InoueKilian19} suggest to use lag-augmentation in combination with a small sample bias-correction as proposed in \citet{Kilian98}. However, it is (again) not entirely clear how to best combine both procedures in the bootstrap algorithm. Thus, we investigate the implication of Kilian's bias-correction separately. We follow the algorithm in \citet{Kilian98} in combination with a i.i.d. bootstrap.

The empirical coverage probabilities of the bias-corrected VAR in Figure \ref{cp_dgp2_bc} are close to their nominal levels and comparable to those of the WIMP. As displayed in Figure \ref{width_dgp2_bc}, some of the bias-corrected VAR intervals are, however, much wider than WIMP and even OLS intervals. 
 
\begin{figure}[!ht]
\begin{center}
\includegraphics[width=0.6\linewidth]{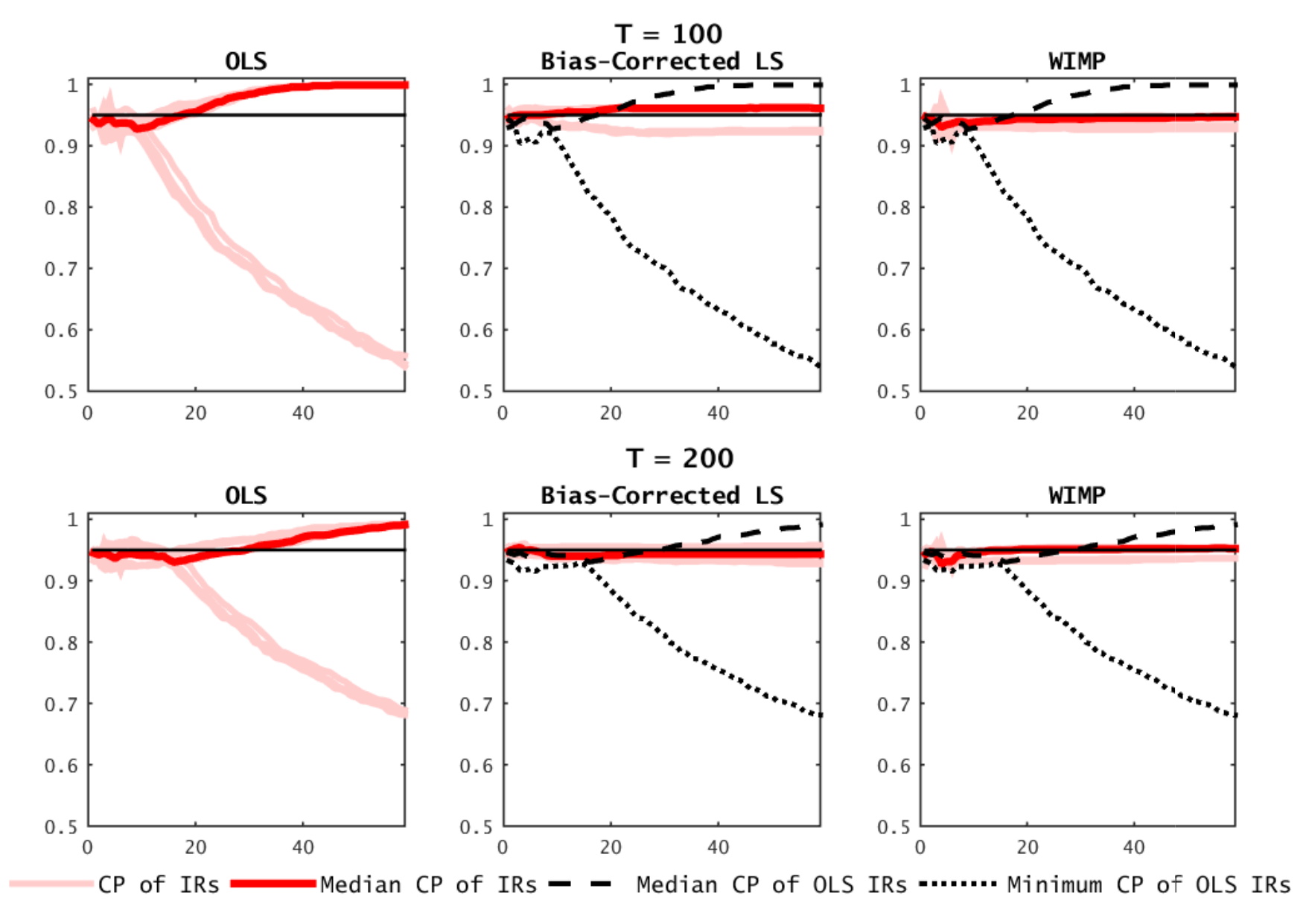}
\caption{DGP2: Empirical coverage rates for the bias-corrected approach for $T=100$ and $T=200$. For details see Figure \ref{cp_dgp1_100}.}
\label{cp_dgp2_bc}
\end{center} 
\begin{center}
\includegraphics[width=0.7\linewidth]{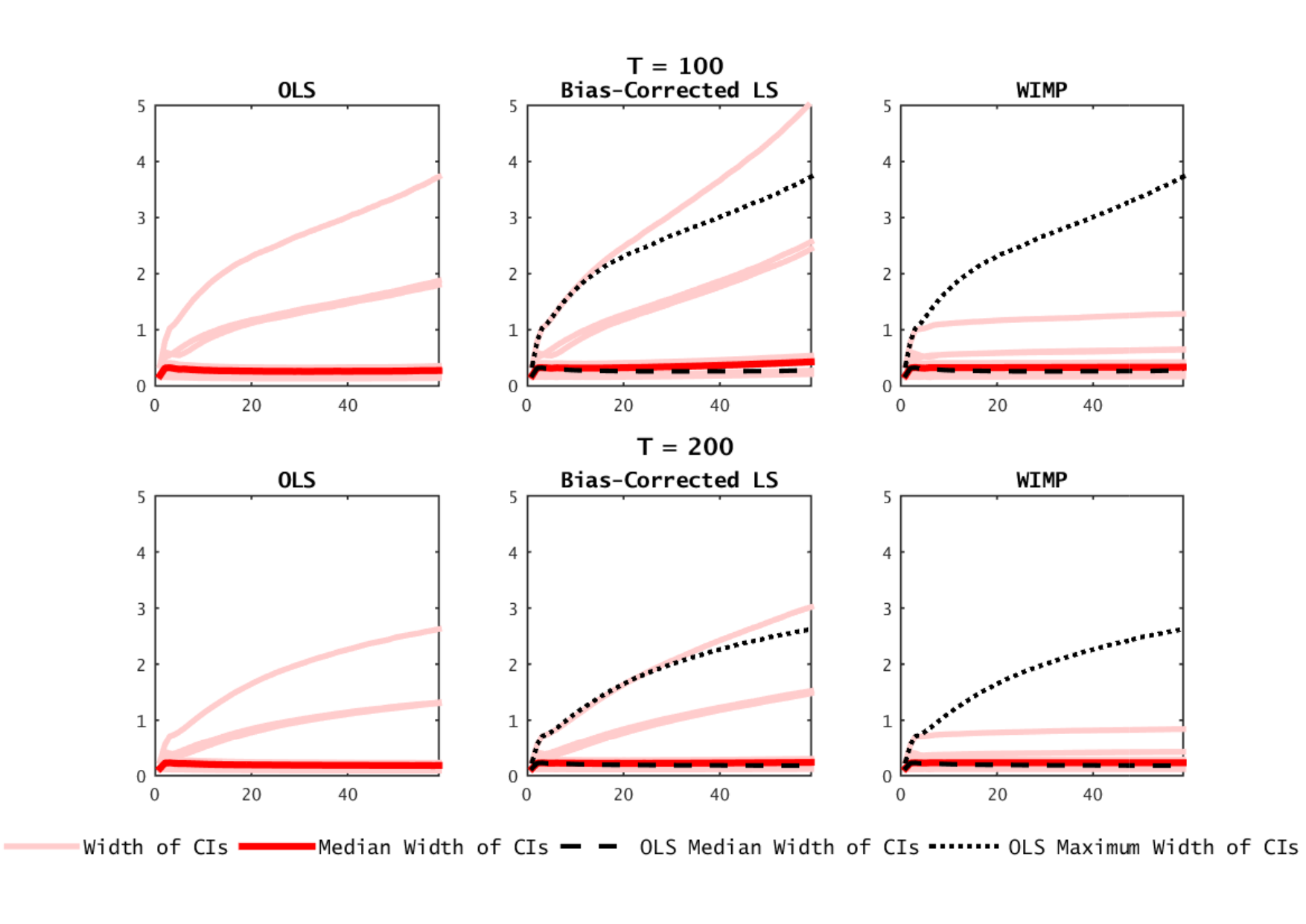}
\caption{DGP2: Average width of 95\% bootstrap CIs for the bias-corrected approach for $T=100$ and $T=200$. For details see Figure \ref{cp_dgp1_100}.}
\label{width_dgp2_bc}
\end{center}
\end{figure}

\newpage

\section{Data} \label{sec:data}
All data is quarterly, sampling from 1950/Q1-2006/Q4. We composed the data from three sources: The Bureau of Economic Analysis' \textit{U.S. National Income and Product Accounts} (NIPA) (\href{https://www.bea.gov/national}{bea.gov/national}), The Bureau of Labor Statistics (BLS) (\href{https://www.bls.gov}{bls.gov}), and \textit{FRED Economic Database} hosted by the Federal Reserve Bank of St.~Louis (\href{https://fred.stlouisfed.org}{fred.stlouisfed.org}).
\begin{description}
\item \textbf{GDP} is taken from NIPA table 1.1.5.
\item \textbf{Consumption} is \textit{private consumption}, NIPA table 1.1.5.
\item \textbf{Investment} is \textit{gross private non-residential investment}, NIPA table 1.1.5.
\item \textbf{Government spending} is \textit{government expenditure and gross investment}, NIPA table 3.9.5.
\item \textbf{Taxes} are \textit{Federal government current tax receipts} plus \textit{contributions for social insurance} minus \textit{income taxes from federal reserve banks}, all in NIPA table 3.2.
\item \textbf{Real wages} are \textit{nonfarm business sector: real compensation per hour}, from the BLS.
\item \textbf{GDP deflator} is taken from NIPA table 1.1.9.
\item \textbf{Federal funds rate} is taken from FRED, series code: \textit{fedfunds}.
\item \textbf{Adjusted reserves} is taken from FRED, series code: \textit{ADJRESSL}.
\end{description}
GDP and its components, government revenue, and adjusted reserves are transformed into real per capita values using the GDP deflator and a population measure (NIPA table 7.1).

\end{appendices}

\end{document}